\newcommand{\R}{\mathbb{R}}
\newcommand{\E}{\mathbb{E}}
\newcommand{\N}{\mathbb{N}}
\newcommand{\Prob}{\mathbb{P}}
\newcommand{\Q}{\mathbb{Q}}
\newcommand{\1}{\mathbf{1}}
\newcommand{\dd}{\mathrm{d}}
\newcommand{\lc}{[\![}
\newcommand{\rc}{]\!]}
\newtheorem{thm}{Theorem}[section]
\newtheorem{prop}[thm]{Proposition}
\theoremstyle{remark}
\newtheorem{ex}[thm]{Example}
\theoremstyle{remark}
\newtheorem{defin}[thm]{Definition}
\theoremstyle{remark}
\theoremstyle{remark}
\newtheorem{remark}[thm]{Remark}
\theoremstyle{remark}
\title{Financial Models with Defaultable Num\'eraires\thanks{
We thank Monique Jeanblanc, Martin Larsson, Martin Schweizer, and Hideyuki Takada for helpful discussions. We are grateful to the Referees and Associate Editor for their very careful reading and their insightful  comments. The research of S.P. benefited from the support of the Chair Markets in Transition (F\'ed\'eration Bancaire Fran\c caise) and the project ANR 11-LABX-0019. The research of S.P. has also received funding from the European Research Council under the European Union's Seventh Framework Programme (FP/2007-2013) / ERC Grant Agreement n.~307465-POLYTE.
 J.R.~acknowledges generous support from the Oxford-Man Institute of Quantitative Finance, University of Oxford.  The views represented herein are the authors' own views and do not necessarily represent the views of Morgan Stanley or its affiliates and are not a product of Morgan Stanley research.
}}
\author{
    Travis Fisher\thanks{No affiliation, E-mail: traviswfisher@gmail.com.} \and
    Sergio Pulido\thanks{Laboratoire de Math\'ematiques et Mod\'elisation d'\'Evry (LaMME), Universit\'e d'\'Evry-Val-d'Essonne, ENSIIE, Universit\'e Paris-Saclay, UMR CNRS 8071, IBGBI 23 Boulevard de France, 91037 \'Evry Cedex, France, Email: sergio.pulidonino@ensiie.fr.} \and
    Johannes Ruf\thanks{
Department of Mathematics, London School of Economics and Political Science, Columbia House, Houghton St, London WC2A 2AE, United Kingdom, E-mail:  j.ruf@lse.ac.uk.           }
}
\begin{document}
\thispagestyle{plain} \maketitle
\begin{abstract}
\noindent 
Financial models are studied where each asset may potentially lose value relative to any other. Conditioning on non-devaluation, each asset can serve as proper num\'eraire and classical valuation rules can be formulated.  It is shown when and how these local valuation rules can be aggregated to obtain global arbitrage-free valuation formulas.
\begin{description}
\item[Keywords:] Defaultable num\'eraire; Devaluation;  Non-classical valuation formulas.
\item[AMS Subject Classification (2010):] 60G48, 60H99, 91B24, 91B25, 91B70, 91G20, 91G40.
\item[JEL Classification:] D53, G13.
\end{description}
\end{abstract}

\section{Introduction}

Classical models of financial markets are built on a family of stochastic processes describing the random dynamics throughout time of the underlying assets' prices in units of a pre-specified num\'eraire.   Such a num\'eraire, often also interpreted as a money market account, is an asset that cannot devalue. In this paper we cover the case where there are multiple financial assets, any of which may potentially lose all value relative to the others. Thus, none of these assets  can serve as a proper num\'eraire. 

Pricing models for  contingent claims that allow for the devaluation of the underlying assets are numerous. For example, they appear naturally in credit risk.   In the terminology introduced by \citet{Schoenbucher_2003, Schoenbucher_2004} such assets are called {\it defaultable num\'eraires}.\footnote{The term ``defaultable num\'eraire'' sometimes appears in the credit risk literature with a different meaning, namely to describe assets with strictly positive but not adapted price processes; for example, \citet{Bielecki:Jeanblanc:Rutkowski:2004} use this definition; see also \cite{Brigo:2005}. Here, however, we follow the spirit of  \citet{Schoenbucher_2003, Schoenbucher_2004}, where a defaultable num\'eraire is a nonnegative semimartingale that has positive probability to become zero. In contrast, a proper num\'eraire shall be a strictly positive semimartingale.}  \citet{Jarrow:Yu}, \citet{Collin:2004}, and \citet{Jamshidian:2004} are further examples of this literature. Financial models for foreign exchange provide examples of other types of assets that might devalue due to the possibility of {hyperinflation} occurring; see, for example, \citet{Camara:Heston}, \citet{CFR2011}, and \cite{Kardaras:2015:valuation}.   

Another class  of models that has drawn much attention  involves strict local martingale dynamics for the asset price processes; see, for example, \citet{Sin} and \citet{HLW}.  Often such models are chosen because they can be interpreted as {bubbles} (\cite{Protter_2013_bubbles}) or they are analytically tractable (\cite{PH_hedging}, \cite{CFR_qnv}).   Both practitioners (\cite{Lewis}, \cite{Paulot}) and academics (\cite{CH}, \cite{MadanYor_Ito}) suggest {non-classical valuation formulas} for contingent claims in such models  in order to be consistent with market prices.  In this paper, we argue that strict local martingale dynamics are consistent with the interpretation that the corresponding num\'eraire devalues. This point of view then enables us to interpret the correction term in  the valuation formula of  \citet{Lewis}    as the value  of the contingent claim's payoff in the scenarios where the num\'eraire devalues. Thus, the valuation formulas of \citet{Lewis}, \cite{MadanYor_Ito},  \cite{Paulot}, or  \citet{Kardaras:2015:valuation} arise as special cases of this paper's framework.  Parallel to this paper, \cite{Herdegen:Schweizer:2017} have developed an alternative consistent and general valuation framework that always guarantees put-call parity.
 
 This paper's contributions can now be summarized as follows.
 \begin{enumerate}
 	\item It provides an interpretation of strict local martingale models, which can arise by fixing a num\'eraire that has positive probability to default.  Non-classical valuation formulas, restoring put-call parity, can then be economically justified and extended.
	\item Assume, for the moment, that for each asset there exists a  probability measure under which  the discounted prices (with the corresponding asset as num\'eraire) are local martingales (or, even,  supermartingales).  These measures  need not be equivalent. This paper provides conditions under which these measures can be \emph{aggregated} to an arbitrage-free valuation operator that takes all events of devaluations into account.
 \end{enumerate}
 
 In Section~\ref{S:2}, we introduce the framework.   We consider a model for $d$ assets.  We have in mind a foreign exchange market, and hence call these assets ``currencies", but really these could represent any asset with non-negative value.  We denote the value of one unit of the $j$-th currency, measured in terms of the $i$-th currency, as $S_{i,j}$.  We model the full matrix $(S_{i,j})_{i,j}$ of these exchange rates.  This is convenient because our main results are formulated in terms of the relative prices contained in the  {\it matrix of exchange rates}. If the $j$-th currency has devalued with respect to the $i$-th currency  at time $t$ we have $S_{i,j}(t) = 0$ and $S_{j,i}(t) = \infty$. In this case, the $j$-th currency cannot be used as a num\'eraire, and the standard results of mathematical finance in units of this currency do not apply. In this setup, however, the basket can be used as a num\'eraire. That is, the portfolio consisting of one unit of each currency with value $\sum_{i,j} S_{i,j}$, measured in terms of the $i$-th currency, is a valid num\'eraire.

Section~\ref{sec:aggreg} contains the paper's main contributions. First, families of \emph{num\'eraire-consistent} probability measures are considered.  Each of these measures corresponds, in a certain sense, to 
fixing a specific currency as the underlying num\'eraire. We call \emph{disaggregation} the step that constructs this family of  \emph{num\'eraire-consistent} probability measures from a valuation  measure when the basket is taken as num\'eraire. We call \emph{aggregation} the reverse step, namely taking a possibly non-equivalent family of probability measures, corresponding to the different currencies as num\'eraires, and constructing a valuation  measure for the basket.  
Embedding a strict local martingale model in a family of num\'eraire-consistent probability measures and then aggregating this family  yields the non-classical valuation formulas of \citet{Lewis}, \citet{MadanYor_Ito}, \citet{Paulot}, and \citet{CFR2011}.  This point of view has two advantages. First of all, it yields generic valuation formulas for any kind of contingent claim. These formulas are consistent with the above-mentioned non-classical valuation formulas, which are usually only provided for specific claims. Second, it gives an economic interpretation to the lack of martingale property as the possibility of a default of the underlying num\'eraire.

Section~\ref{S:proofs} contains the proofs of the main results.   

We point out the recent work of \cite{Tehranchi:2015}, who considers an economy where prices quoted in terms of a given non-traded currency are not necessarily positive. Relative prices between the assets are not studied. Instead,  \cite{Tehranchi:2015} focuses on different arbitrage concepts taking into consideration that the agent might not be able to substitute today's consumption by tomorrow's consumption.

\subsection*{Notation}
Throughout the paper we fix a deterministic time horizon $T>0$ and consider an economy with $d \in \N$ traded assets, called ``currencies''.  To reduce notation, we shall use the generic letter $t$ for time and abstain from using the qualifier ``$\in [0,T]$''.  We shall also use the generic letters $i,j,k$ for the currencies and again abstain from using the qualifier ``$\in \{1, \cdots, d\}$". For example, we shall write ``$\sum_{j}$'' to denote ``$\sum_{j=1}^d$''.  When introducing a process $X = (X(t))_{t \in [0,T]}$, we usually omit ``$= (X(t))_{t \in [0,T]}$''. If $v \in \R^d$, we understand inequalities of the form $v \geq 0$ componentwise. For a matrix $\Gamma\in\R^{d\times d}$, we shall denote by $\Gamma_i$ the $i$-th row of $\Gamma$. Moreover, we use the convention $\inf \emptyset = \infty$ and we denote the cardinality of a countable set $A$ by $|A|$. Furthermore, we emphasize that a product $xy$ of two numbers $x,y \in [0,\infty]$ is always defined except if either (a) $x=0$ and $y = \infty$ or (b) $x=\infty$ and $y = 0$.

We fix a filtered space $(\Omega,\mathcal{F}(T),(\mathcal{F}({t}))_{t})$, where the filtration $(\mathcal{F}({t}))_{t}$ is assumed to be right-continuous and $\mathcal F(0)$ to be trivial. In the absence of a probability measure, all statements involving random variables or events are supposed to hold pathwise for all $\omega\in\Omega$. For an event $A \in \mathcal F(T)$, we set $\1_A(\omega) \times \infty$ and $\1_A(\omega) \times(- \infty)$ to $\infty$ and $-\infty$, respectively, for all $\omega \in A$ and to $0$ for all $\omega \notin A$. Let us now consider a probability measure $\Q$ on $(\Omega, \mathcal F(T))$. We write $\E^{\Q}$ for the corresponding expectation operator and $\E^{\Q}_t$ for the conditional expectation operator, given $\mathcal F(t)$, for each $t$. We let $L^{1}(\Q)$ denote the space of (equivalence classes of) real-valued random variables $Z$ such that $\E^{\Q}[|Z|]<\infty$. For a real-valued semimartingale $X$ with $X(0) = 0$ we write $\mathcal E(X)$ to denote its stochastic exponential; that is,
\begin{align*}
	\mathcal E(X) = \mathrm e^{X-[X,X]^c/2} \prod_{s \leq \cdot} (1+\Delta X_s) \mathrm e^{- \Delta X_s},
\end{align*}
where $\Delta X = X-X_-$ and $[X,X]^c$ denotes the quadratic variation of the continuous part of  $X$.

\section{Framework}  \label{S:2}
This section introduces the concept of \emph{exchange matrices} to represent prices of the underlying currencies and the related concept of \emph{value vectors} to describe valuations of contingent claims with the currencies as underlying. 

We put ourselves in an economy that is characterized by the price processes of $d$ currencies  relative to each other  via an $[0,\infty]^{d \times d}$--valued, right-continuous, $(\mathcal F(t))_t$--adapted process $S =  (S_{i,j})_{i,j}$. Here, the process $S_{i,j}$ denotes the price process of the $j$-th currency in units of the $i$-th currency. We also refer to \citet{Vecer_2011}, where a similar point of view is taken.  In order to simplify the analysis below we assume that interest rates are zero. Alternatively, we might interpret $S_{i,j}(t)$ as the price of one unit of the $j$-th money market in terms of units of the $i$-th money market at time $t$, for each $i,j,$ and $t$.

In order to provide an economic meaning to the matrix-valued process $S$ we shall assume that it satisfies certain consistency conditions. Formally, we assume that $S(t)$  is an
 exchange matrix for each $t$, in the sense of the following definition.
\begin{defin}[Exchange matrix]\label{D:1}
	An exchange matrix is a $d\times d$-dimensional
matrix $s=(s_{i,j})_{i, j}$ taking values in $[0,\infty]^{d \times d}$ with the property
that $s_{i,i} = 1$ and $s_{i,j}s_{j,k}=s_{i,k}$ for all $i,j,k$, whenever the product is defined.
\qed
\end{defin}
Note that the definition implies, in particular, that an exchange matrix $s$ also satisfies that $s_{i,j} = 0$ if and only if $s_{j,i} = \infty$ for all $i,j$.  The consistency conditions of Definition~\ref{D:1} guarantee the following: for fixed $i,j,k$, an investor who wants to exchange units of the $i$-th currency into units  of the $k$-th currency is indifferent between exchanging directly $s_{i,k}$ units of the $i$-th currency into the $k$-th currency or, instead, going the indirect way and first exchanging the appropriate amount of units of the $i$-th currency into the $j$-th currency and then exchanging those units into the $k$-th currency.

For each $t$, we define the index set of ``active currencies''
	$$\mathfrak{A}(t) = \left\{i: \sum_j S_{i,j}(t) < \infty\right\}.$$
If $i \in \mathfrak{A}(t)$ for some $t$ then the $i$-th currency has not devalued against any other currency. To simplify notation, we shall assume that $\mathfrak A(0) = \{1, \cdots, d\}$; that is, at time $0$ no currency has devalued.

\begin{remark}[Existence of a strong currency]\label{r:finiterow} We always have $\mathfrak{A}(t) \not= \emptyset$ for each $t$. More precisely, if $s$ is an exchange matrix, there exists $i$ such that $s_{i,j}\leq 1$ for all $j$. To see this, we define, on the set of indices $\{1,\ldots,d\}$, a total preorder as follows: $j\preceq k$ if and only if $s_{j,k}\geq 1$, that is, if and only if the $k$-th currency is ``stronger'' than the $j$-th currency. The consistency conditions of Definiton~\ref{D:1} guarantee that  this is a total preorder. As the set of indices is finite, there exists a (not necessarily unique) maximal index $i$ corresponding to the ``strongest'' currency. For such an index $i$ we have   $s_{i,j}\leq 1$ for all $j$. \qed
\end{remark}

We are interested in additional assets in the economy besides the $d$ currencies and in their relative valuation with respect to those currencies. Towards this end, we introduce the notion of value vector.

\begin{defin}[Value vector for exchange matrix]  \label{D:2}
	A value vector for an exchange matrix $s$  is a $d$-dimensional vector $v=(v_{i})_{i}$ taking values in $[0,\infty]^{d}$ with the property that $s_{i,j}v_{j}=v_{i}$ for all $i,j$,
whenever the product is defined.
\qed
\end{defin}

	A value vector encodes the valuation of an asset in terms of the $d$ currencies.  More precisely, the $i$-th component describes how many units of the $i$-th currency are required to obtain one unit of that specific asset.   The consistency condition in Definition~\ref{D:2} guarantees again that an investor who wants a unit of the new asset does not prefer to first exchange her currency into another one in order to obtain that asset. We shall write
\begin{align*}
	\mathcal{C} &=\Bigg\{C: \quad \text{$C$ is an $\mathcal F(T)$--measurable value vector for $S(t)$}  \Bigg\}.
\end{align*}

For all $i$ we denote by $I^{(i)}$ the value vector corresponding to the value of one unit of the $i$-th currency at time $T$ in terms of the other currencies, that is
    \begin{equation}\label{eq:vector_I}
        I^{(i)}=(S_{j,i}(T))_j.
    \end{equation}

We now consider the financial market where prices are quoted in terms of the basket asset -- the portfolio consisting of one unit of each currency. 
The value of the basket in terms of the $i$-th asset equals exactly $\sum_{j} S_{i,j}$ and the value of the $i$-th asset in terms of the basket equals its reciprocal. 
This simple observation allows us to express the relative price process $\overline{S}=(\overline{S}_i)_i$ with respect to the basket num\'eraire  as
\begin{equation}\label{def:basketmkt}
\overline{S}_i=\frac{1}{\sum_{j}S_{i,j}}.
\end{equation}
Observe that in~\eqref{def:basketmkt} there are no divisions by zero because $\sum_{j}S_{i,j}\geq  S_{i,i} = 1$; hence, $0\leq \overline{S}_i\leq 1$ for all $i$.
Vice versa, note that we  get
\begin{align}  \label{eq:160727.1}
	S_{i,j} = \frac{S_{i,j} \left(\sum_{k}S_{i,k}\right)^{-1}}{\overline S_i}  = \frac{ \left(\sum_{k}S_{j,k}\right)^{-1}}{\overline S_i}  =  \frac{\overline S_j}{\overline S_i},
\end{align}
whenever the fractions are well-defined, for all $i,j$.
Thanks to~\eqref{eq:160727.1}, with $i^* \in \mathfrak A(t)$, we get
\begin{align} \label{eq:170310.3}
	\sum_{j} \overline S_j(t)= \sum_{j\in \mathfrak A(t)} \frac{S_{i^*,j}(t)\overline S_j(t)}{S_{i^*,j}(t)} =  \overline S_{i^*}(t)\sum_{j\in  \mathfrak A(t)}  S_{i^*,j}(t) = 1
\end{align}
for all $t$; hence, in the basket market an asset representing a risk-free bond can be replicated.

Next, given a value vector $C \in \mathcal C$, we let $\overline C$ be the payoff of the claim $C$ in terms of the basket. More precisely, $\overline C = \overline S_i(T) C_i$ for all $i\in\mathfrak A(T)$. This quantity is well-defined thanks to~\eqref{eq:160727.1} and we have the representation 
\begin{equation}\label{def:basketvalue}
\overline{C}=\frac{1}{{|\mathfrak A(T)|}}\sum_{j \in\mathfrak A(T) }\overline{S}_j(T) C_j.
\end{equation}
As $0<\overline{S}_i(T)\leq 1$ on $\{i\in\mathfrak A(T)\}$, all the multiplications in \eqref{def:basketvalue} are well-defined. To understand the representation in \eqref{def:basketvalue} better, first note that inserting the value vector $C = I^{(i)}$ of \eqref{eq:vector_I} leads exactly to $\overline C = \overline S_i(T)$, for each $i$.\footnote{Indeed, with $C = I^{(i)}$,  \eqref{def:basketvalue} becomes, with the help of \eqref{eq:160727.1},  for each $i$,
 \begin{align*}
 	\overline{C}=\frac{1}{{|\mathfrak A(T)|}}\sum_{j\in\mathfrak A(T)} \overline S_j (T) S_{j,i}(T)  = \frac{1}{{|\mathfrak A(T)|}}\sum_{j\in\mathfrak A(T)} \overline S_i(T)
	= \overline S_i (T).
 \end{align*}
 }
 More generally, the formula in \eqref{def:basketvalue} transforms, for each active currency $i \in \mathfrak A(T)$ the payoff $C_i$ into a payoff measured with respect to the basket, by dividing with the value of the basket. All of the  $|\mathfrak A(T)|$ values are identical, thus summing them up and dividing by  $|\mathfrak A(T)|$ does not modify the value, which can be interpreted as the payoff of the corresponding contingent claim, measured in terms of the basket. The decomposition of $\overline C$ given in~\eqref{def:basketvalue} is crucial to obtain the aggregation results of Section~\ref{SS:4.1}.
 
We have now translated the setup of this section into a classical setup, with $\overline S$ denoting a vector-valued price process, measured in terms of a basket as in \cite{Yan_new}, and $\overline C$ denoting a one-dimensonal random variable, representing the payoff of a contingent claim, again measured in terms of the basket.  

\begin{remark}
We could have started our study from a more classical setup where the prices of the $d$ assets in the market, quoted in terms of an external num\'eraire, are given by a vector-valued process $X=(X_1,\ldots,X_d)$. In this context, the relative prices would correspond to $S_{i,j}=X_j/X_i$ whenever the fractions are well-defined. If the external num\'eraire is given by the basket, as shown in \eqref{eq:160727.1}, this is precisely the framework described above where the basket-quoted prices are denoted by $\overline{S}$. \qed
\end{remark}

We say that a probability measure $\Q_i$ is a valuation measure with respect to $i$-th currency if the prices $S_i$ are supermartingales with respect to $\Q_i$.   A probability measure $\overline \Q$ on $(\Omega,\mathcal F(T))$ is a valuation measure with respect to the basket if the basket-quoted prices $\overline{S}$ are local martingales with respect to $\overline\Q$. As $0\leq \overline{S}_i\leq 1$ for all $i$, $\overline\Q$ is a valuation measure with respect to the basket if and only if $\overline{S}$ is a $\overline\Q$--martingale.

\section{Aggregation and disaggregation of measures}
\label{sec:aggreg}
We now investigate how to aggregate a family $(\Q_i)_i$ of valuation measures, each supported on a subset of the set $\Omega$ of possible scenarios and relative to one of the $d$ currencies, to a valuation measure $\overline \Q$ with respect to the basket.  We provide the proofs of this section's assertions in Section~\ref{S:proofs}.  We structure this study in four parts. 

Subsection~\ref{SS:4.1} discusses how the existence of a valuation measure $\overline \Q$ with respect to the basket yields a family of $d$ probability measures, which are not necessarily equivalent. Each of these $d$ measures can be interpreted as a valuation measure with one of the $d$ num\'eraires fixed.   Moreover, the measures are related to each other via a generalized change-of-num\'eraire formula. This property is called \emph{num\'eraire-consistency}.   We then show that if a family of probability measures is num\'eraire-consistent they can be ``stuck together'' to yield a valuation measure $\overline \Q$ with respect to the basket.

Subsection~\ref{SS:4.1.5} compares the results of Subsection~\ref{SS:4.1} with the classical valuation formulas in mathematical finance.
 Subsection~\ref{SS:4.2}  provides two examples.  They illustrate, in particular, how the results of \citet{CFR2011} and \citet{Camara:Heston} are special cases of this paper's setup. 
In Subsection~\ref{SS:4.3} we start with $d$ probability measures, each serving again as a valuation measure for a fixed num\'eraire. However, this time we do not assume that these measures are num\'eraire-consistent. We then study conditions such that a valuation measure $\overline \Q$ with respect to the basket exists.

\subsection{Aggregation with num\'eraire-consistency and disaggregation}  \label{SS:4.1}

We start by introducing and discussing the following consistency condition.
\begin{defin}[Num\'eraire-consistency of probability measures]  \label{D:consistency}
Suppose that $(\Q_i)_i$ is a family of probability measures. We say that $(\Q_i)_i$ is a num\'eraire-consistent family of probability measures if for all  $A \in \mathcal F(t)$ we have
    \begin{equation}\label{eq:consistent2a}
        \E^{\Q_i}[S_{i,j}(t) \1_A]= S_{i,j}(0) {\Q_j}(A \cap {\{S_{j,i}(t)>0\}})
    \end{equation}
 for all $i,j$ and $ t$.
\qed
\end{defin}

\begin{prop}[Properties of a num\'eraire-consistent family of probability measures]  \label{P:cons}
Suppose that $(\Q_i)_i$ is a num\'eraire-consistent family of probability measures. Then the following statements hold, for each $i,j$.
\begin{enumerate}[label={\rm(\alph{*})}, ref={\rm(\alph{*})}]
\item  \label{P:cons:1} $S_i$ is a $\Q_i$--supermartingale; thus, in particular, $\Q_i(\bigcap_t\{i\in{\mathfrak{A}(t)}\})=1$. More precisely, we have
    \begin{equation}\label{eq:consistent2b}
        \E_r^{\Q_i}[S_{i,j}(t)X]= S_{i,j}(r) \E_r^{\Q_j}[X \1_{\{S_{j,i}(t)>0\}}], \qquad  \text{$\Q_i$--almost surely} 
    \end{equation}
     for all bounded $\mathcal F(t)$--measurable random variables $X$ and $r \leq t$.
\item \label{P:cons:2}
$S_{i,j}$ is a $\Q_i$--local martingale  if and only if $S_{j,i}$ does not jump to zero under $\Q_j$.
\item \label{P:cons:3}
For each stopping time $\tau$, 
 $S_{i,j}^\tau$ is a $\Q_i$--martingale  if and only if $\Q_j(S_{j,i}(\tau) > 0) = 1$. Moreover, in this case we have $\dd \Q_j /\dd \Q_i|_{\mathcal{F}(\tau)} = S_{j,i}(0) S_{i,j}(\tau) $.   In particular, the $i$-th currency does not completely devalue with respect to the $j$-th currency, if and only if $S_{i,j}$ is a true $\Q_i$--martingale.
\end{enumerate}
\end{prop}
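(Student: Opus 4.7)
The plan is to prove (a), (c), (b) in that order, since (c) feeds into the proof of (b). The main workhorse is a version of \eqref{eq:consistent2a} extended to general bounded integrands, together with an ``absorbing at zero'' property of $S_{j,i}$ under $\Q_j$. As a preliminary, taking $A = \{S_{i,j}(t) = \infty\} = \{S_{j,i}(t) = 0\}$ in \eqref{eq:consistent2a} gives $\Q_i(S_{i,j}(t) = \infty) = 0$ for every $t$, and a monotone class argument extends \eqref{eq:consistent2a} to
\[
\E^{\Q_i}[S_{i,j}(t) X] = S_{i,j}(0)\, \E^{\Q_j}[X \1_{\{S_{j,i}(t) > 0\}}]
\]
for all bounded $\mathcal F(t)$-measurable $X$.

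For (a), I first derive the absorbing property $\{S_{j,i}(r) = 0\} \subset \{S_{j,i}(t) = 0\}$ $\Q_j$-a.s. (for $r \leq t$): apply the extended identity with $i$ and $j$ swapped to $X = \1_{\{S_{j,i}(r) = 0\}}$; the right-hand side vanishes by the preliminary since $\{S_{j,i}(r) = 0\} = \{S_{i,j}(r) = \infty\}$ is $\Q_i$-null, so $\E^{\Q_j}[S_{j,i}(t) \1_{\{S_{j,i}(r) = 0\}}] = 0$ and nonnegativity forces the inclusion. Identity \eqref{eq:consistent2b} then follows by chaining the extended identity at time $t$, the absorbing property (so that $\1_A\1_{\{S_{j,i}(t) > 0\}} = \1_A\1_{\{S_{j,i}(r) > 0\}}\1_{\{S_{j,i}(t) > 0\}}$ $\Q_j$-a.s. for $A \in \mathcal F(r)$), the $\Q_j$-tower property, and the extended identity at time $r$. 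Setting $X = 1$ in \eqref{eq:consistent2b} yields the supermartingale property. Doob's maximal inequality for the nonneg $\Q_i$-supermartingale $S_{i,j}$ gives $\sup_t S_{i,j}(t) < \infty$ $\Q_i$-a.s., and summing over the finite index $j$ proves $\Q_i(\bigcap_t\{i \in \mathfrak A(t)\}) = 1$.

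For (c), I first extend \eqref{eq:consistent2a} from deterministic $t$ to bounded stopping times $\tau$. The finitely-valued case follows by summing over level sets. A general $\tau$ is approximated by dyadic $\tau_n \downarrow \tau$: optional sampling combined with Fatou gives $\E^{\Q_i}[S_{i,j}(\tau_n) \1_A] \to \E^{\Q_i}[S_{i,j}(\tau) \1_A]$ for $A \in \mathcal F(\tau)$, while the absorbing property combined with right-continuity of $S_{j,i}$ gives $\{S_{j,i}(\tau_n) > 0\} \uparrow \{S_{j,i}(\tau) > 0\}$ $\Q_j$-a.s. From the extension, $\E^{\Q_i}[S_{i,j}(\tau)] = S_{i,j}(0)\Q_j(S_{j,i}(\tau) > 0)$, so the nonneg $\Q_i$-supermartingale $S_{i,j}^\tau$ is a martingale iff $\Q_j(S_{j,i}(\tau) > 0) = 1$. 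The density formula then reads, for $A \in \mathcal F(\tau)$, $\Q_j(A) = \Q_j(A \cap \{S_{j,i}(\tau) > 0\}) = S_{j,i}(0)\,\E^{\Q_i}[S_{i,j}(\tau) \1_A]$, using $S_{j,i}(0)S_{i,j}(0) = 1$; the ``in particular'' statement is the case $\tau = T$.

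For (b), set $\tau_n = \inf\{t : S_{i,j}(t) \geq n\} \wedge T$; by (a), $\tau_n \uparrow T$ $\Q_i$-a.s. For $(\Leftarrow)$, if $S_{j,i}$ has no $\Q_j$-jumps to zero, then the alternative $S_{i,j}(\tau_n) = \infty$ would force $S_{j,i}$ to jump from $S_{j,i}(\tau_n-) \geq 1/n > 0$ to zero at $\tau_n$, which is excluded; hence $\Q_j(S_{j,i}(\tau_n) > 0) = 1$, and (c) makes $S_{i,j}^{\tau_n}$ a $\Q_i$-martingale. For $(\Rightarrow)$, a localizing sequence $\sigma_k$ yields $\Q_i$-martingales $S_{i,j}^{\sigma_k \wedge \tau_n}$, and dominated convergence with dominator $n \vee S_{i,j}(\tau_n)$ (integrable by Fatou applied to $\E^{\Q_i}[S_{i,j}(\sigma_k\wedge\tau_n)]=S_{i,j}(0)$) gives $\E^{\Q_i}[S_{i,j}(\tau_n)] = S_{i,j}(0)$, so $\Q_j(S_{j,i}(\tau_n) > 0) = 1$ for every $n$ by (c). The main obstacle is upgrading this family of identities to ``no jumps to zero under $\Q_j$'': assuming for contradiction a first jump-to-zero time $\rho$ of $S_{j,i}$ with $\Q_j(\rho \leq T) > 0$, left-continuity of $S_{j,i}$ at $\rho$ combined with the absorbing-at-zero property forces $\inf_{s < \rho} S_{j,i}(s) > 0$ $\Q_j$-a.s.\ on that event, whence $\sup_{s < \rho} S_{i,j}(s) < \infty$; decomposing according to the level $M$ bounding this sup and choosing $n > M$ produces a positive-measure sub-event on which $\tau_n = \rho$ and hence $S_{j,i}(\tau_n) = 0$, contradicting $\Q_j(S_{j,i}(\tau_n) > 0) = 1$.
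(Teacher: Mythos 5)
Your proposal is correct and follows essentially the same route as the paper: extend \eqref{eq:consistent2a} from indicators to bounded integrands, chain two applications of that identity with the tower property to get \eqref{eq:consistent2b}, extend \eqref{eq:consistent2a} to stopping times, and localize $S_{i,j}$ under $\Q_i$ with first crossing times $\tau_n=\inf\{t:S_{i,j}(t)\geq n\}\wedge T$. The only differences are in presentation: you derive the stopping-time version by dyadic approximation and backward-supermartingale convergence where the paper cites Proposition~2.3 of Perkowski--Ruf, you establish the absorbing-at-zero property of $S_{j,i}$ under $\Q_j$ explicitly, and you spell out the equivalence (via the contradiction with a first jump-to-zero time $\rho$) between $\Q_j(S_{j,i}(\tau_n)>0)=1$ for all $n$ and the absence of $\Q_j$--jumps of $S_{j,i}$ to zero, a step the paper leaves as ``localize with a sequence of first crossing times.''
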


Note that \eqref{eq:consistent2b} can be interpreted as a change-of-num\'eraire formula.

\begin{remark}[An interpretation for num\'eraire-consistency]\label{R:int}
Let  $(\Q_i)_i$ be a num\'eraire-consistent family of probability measures. Then
 with $w_{i,j} = S_{i,j}(0) / \sum_k  S_{i,k}(0) \in (0,1)$ for all $i,j$,  we have $\sum_j w_{i,j} = 1$ and
     \begin{align*}
         1-  \frac{1}{  \sum_k  S_{i,k}(0)}  \E^{\Q_i}\left[\sum_j S_{i,j}(T)\right]  &=  \sum_j w_{i,j} \left(1-S_{j,i}(0) \E^{\Q_i}[S_{i,j}(T)] \right) =\sum_j w_{i,j} \Q_j(S_{j,i}(T)= 0)
    \end{align*}
    for all $i$.  Therefore,  the normalized expected decrease of the total value of all currencies, measured in terms of the $i$-th currency, equals the sum of  the weighted probabilities that the $i$-th currency completely devalues.  The weights correspond exactly to the proportional value of the corresponding currency at time zero. \qed
\end{remark}

We are now ready to formulate a first aggregation result for a num\'eraire-consistent family of probability measures. 

\begin{thm}[Aggregation and disaggregation]\label{T:extending_val_operator_prelim}
The following statements hold.
\begin{enumerate}[label={\rm(\alph{*})}, ref={\rm(\alph{*})}]
	\item\label{T:extending_val_operator_prelim:a} Given a  valuation measure  $\overline \Q$ with respect to the basket there exists a  unique num\'eraire-consistent family of probability measures $(\Q_i)_i$ such that $\sum_i\Q_i  \sim \overline \Q$ and
	\begin{equation}\label{eq:extension_val_operator}
       \E^{\overline \Q}_r [\overline C]=\sum_{j \in \mathfrak{A}(r)} \overline S_{j}(r)\E^{\Q_{j}}_r\left[\frac{C_j}{|\mathfrak{A}(t)|}\right]
    \end{equation}
for all $r$ and $C\in \mathcal{C}$ such that $\overline C\in L^{1}(\overline \Q)$.  
\item\label{T:extending_val_operator_prelim:b} Given a num\'eraire-consistent family of probability measures $(\Q_i)_i$  there exists a unique valuation measure  $\overline \Q \sim \sum_i\Q_i$   with respect to the basket that satisfies \eqref{eq:extension_val_operator} for all $r$  and $C\in \mathcal{C}$ such that $C_i\in L^1(\Q_i)$ for all $i$.
\item\label{T:extending_val_operator_prelim:c}  Consider a  valuation measure  $\overline \Q$ with respect to the basket. Let  $(\Q_i)_i$ be the corresponding num\'eraire-consistent family of probability measures from \ref{T:extending_val_operator_prelim:a} and fix $r$. If a contingent claim $C \in \mathcal C$ satisfies   $\overline C = \overline C \1_{\{i \in \mathfrak A(T)\}}$, $\overline \Q$--almost surely,  for some $i$, and $\overline{C}\in L^{1}(\overline \Q)$ then we have
\begin{align}  \label{eq:220815}
	 \E^{\overline \Q}_r [\overline C]= \overline S_{i} (r) \E^{\Q_i}_r[C_i].
\end{align}
\end{enumerate}
\end{thm}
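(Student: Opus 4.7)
The plan is to prove (a) and (c) in tandem by identifying the canonical density $d\Q_i/d\overline{\Q} = \overline{S}_i(T)/\overline{S}_i(0)$, and then to invert the construction in (b) by defining $\overline{\Q}$ as a weighted mixture of the $\Q_i$. Part (c) is essentially a special case of \eqref{eq:extension_val_operator}, and it is what pins down uniqueness in (a), which in turn delivers uniqueness in (b).

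For part (a), given $\overline{\Q}$, the density $d\Q_i/d\overline{\Q} := \overline{S}_i(T)/\overline{S}_i(0)$ defines a probability measure because $\overline{S}_i$ is a $\overline{\Q}$-martingale with $\overline{S}_i(0) > 0$ (as $\mathfrak{A}(0) = \{1,\dots,d\}$), and $\Q_i$ is then concentrated on $\{\overline{S}_i(T) > 0\} = \{i \in \mathfrak{A}(T)\}$, matching Proposition~\ref{P:cons}\ref{P:cons:1}. Numéraire-consistency \eqref{eq:consistent2a} follows from Bayes and the tower property via $\E^{\Q_i}[S_{i,j}(t)\1_A] = (1/\overline{S}_i(0))\E^{\overline{\Q}}[S_{i,j}(t)\overline{S}_i(t)\1_A]$, using $S_{i,j}\overline{S}_i = \overline{S}_j$ from \eqref{eq:160727.1} together with the exchange-matrix identity that on $\{\overline{S}_j(t) > 0\}$ one has $\{S_{j,i}(t) > 0\} = \{i \in \mathfrak{A}(t)\}$. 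The valuation identity \eqref{eq:extension_val_operator} then follows termwise by applying Bayes to the decomposition \eqref{def:basketvalue} and exploiting $\overline{S}_j(T)C_j = \overline{C}$ on $\{j \in \mathfrak{A}(T)\}$.

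For part (c), the same density together with the pointwise identity $\overline{C}\,\1_{\{i \in \mathfrak{A}(T)\}} = \overline{S}_i(T)C_i$ (all summands in \eqref{def:basketvalue} agree on their common active sets) gives, via Bayes, $\E^{\overline{\Q}}_r[\overline{C}] = \E^{\overline{\Q}}_r[\overline{S}_i(T)C_i] = \overline{S}_i(r)\E^{\Q_i}_r[C_i]$ on $\{\overline{S}_i(r) > 0\}$; on the complement both sides vanish since $\overline{S}_i$ is a nonnegative martingale. Uniqueness in (a) is then obtained by specializing \eqref{eq:extension_val_operator} to the value vector $C_k = \1_A\,\1_{\{i \in \mathfrak{A}(T)\}}\,S_{k,i}(T)$ for arbitrary $A \in \mathcal F(T)$, which satisfies $\overline{C} = \1_A\overline{S}_i(T)$: combining \eqref{eq:extension_val_operator} with numéraire-consistency and the identity $|\mathfrak{A}(T)| = \sum_k \1_{\{S_{i,k}(T) > 0\}}$ on $\{i \in \mathfrak{A}(T)\}$ collapses everything to $\overline{S}_i(0)\Q_i(A) = \E^{\overline{\Q}}[\1_A\overline{S}_i(T)]$, forcing $d\Q_i/d\overline{\Q}$ to agree with the canonical density.

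For part (b), given a numéraire-consistent family $(\Q_i)_i$, define
\begin{align*}
\overline{\Q}(A) := \sum_i \overline{S}_i(0)\,\E^{\Q_i}\!\left[\frac{\1_A}{|\mathfrak{A}(T)|\,\overline{S}_i(T)}\right], \qquad A \in \mathcal F(T).
\end{align*}
Each integrand is well-defined since $\Q_i$-a.s.\ $i \in \mathfrak{A}(T)$, so $\overline{S}_i(T) > 0$ and $1/\overline{S}_i(T) = \sum_k S_{i,k}(T)$. To see $\overline{\Q}(\Omega) = 1$, expand the reciprocal, apply numéraire-consistency to convert $\E^{\Q_i}[S_{i,k}(T)Y]$ into $S_{i,k}(0)\E^{\Q_k}[Y\,\1_{\{i \in \mathfrak{A}(T)\}}]$, and collapse using $\overline{S}_i(0)S_{i,k}(0) = \overline{S}_k(0)$, the identity $\sum_i \1_{\{i \in \mathfrak{A}(T)\}} = |\mathfrak{A}(T)|$, and $\sum_k \overline{S}_k(0) = 1$ from \eqref{eq:170310.3}. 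Running the analogous computation with the extra factor $\overline{S}_j(T)/\overline{S}_j(0)$ yields $d\Q_j/d\overline{\Q} = \overline{S}_j(T)/\overline{S}_j(0)$, after which \eqref{eq:extension_val_operator} and the $\overline{\Q}$-martingale property of $\overline{S}$ follow as in (a) — for the latter, take $C = I^{(j)}$ and use $\{k \in \mathfrak{A}(T)\} \subseteq \{k \in \mathfrak{A}(r)\}$ $\Q_j$-a.s., valid because $S_{j,k}$ is a nonnegative $\Q_j$-supermartingale by Proposition~\ref{P:cons}\ref{P:cons:1}. Equivalence $\overline{\Q} \sim \sum_i \Q_i$ is immediate from the density relation and the cover $\Omega = \bigcup_i \{i \in \mathfrak{A}(T)\}$, and uniqueness follows because any alternative $\overline{\Q}'$ must, by (a) applied to $\overline{\Q}'$, recover the same family $(\Q_i)$, and hence by (c) the same density $\overline{S}_i(T)/\overline{S}_i(0)$ on each $\{i \in \mathfrak{A}(T)\}$, determining $\overline{\Q}'$ on all of $\Omega$.

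The main obstacle will be the careful bookkeeping around the $0\cdot\infty$ conventions declared in the notation section: the products $\overline{S}_i(T)C_i$, $S_{i,j}(t)\overline{S}_i(t)$, and the expansion $1/\overline{S}_i(T) = \sum_k S_{i,k}(T)$ are only defined on the appropriate active sets, so every chain of identities must be checked to consist only of well-defined products on the sets where the relevant measures place mass. The linchpin that keeps everything afloat is the pair of exchange-matrix consistency facts from Definition~\ref{D:1}: on $\{j \in \mathfrak{A}(t)\}$ every $S_{j,k}(t)$ is finite, and $S_{j,i}(t) > 0$ is equivalent to $i \in \mathfrak{A}(t)$ via the factorization $S_{i,k} = S_{i,j}S_{j,k}$.
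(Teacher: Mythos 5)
Your proposal is correct and follows essentially the same route as the paper: the canonical density $\dd\Q_i/\dd\overline\Q=\overline S_i(T)/\overline S_i(0)$, Bayes applied to the decomposition \eqref{def:basketvalue}, the identities \eqref{eq:160727.1} and \eqref{eq:170310.3}, and num\'eraire-consistency to collapse the double sums. The only cosmetic differences are that in (b) you define $\overline\Q$ via the integrand $\1_A/(|\mathfrak A(T)|\,\overline S_i(T))$ rather than directly as the mixture $\sum_i\overline S_i(0)\Q_i$ (the two definitions coincide, as your normalization computation in effect shows), and your uniqueness argument uses a slightly different test claim than the paper's $\overline C=\1_{A\cap\{i\in\mathfrak A(T)\}}$, to the same effect.
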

The proof of Theorem~\ref{T:extending_val_operator_prelim} reveals the relationships  $
{\dd\Q_i}/{\dd \overline \Q}={\overline{S}_i(T)}/{\overline{S}_i(0)}$ and $\overline \Q = \sum_i \overline{S}_i(0)  {\Q_i}$.
Let us interpret the representation in \eqref{eq:extension_val_operator}. In order to compute the valuation of a contingent claim  $C\in \mathcal{C}$ under a valuation measure  $\overline \Q \sim \sum_i\Q_i$   with respect to the basket one can proceed according to the following steps.  First, one replaces the claim $C$ by the claim $\widetilde C = C/|\mathfrak A(T)|$; to wit, one divides the payoff of the contingent claim by the number of active currencies at maturity $T$.  Then, one computes the  expectation of this payoff under $\Q_j$ corresponding to fixing the $j$-th currency as num\'eraire, for each $j$.  One then converts all these values into the basket and adds them up. If the contingent claim $C$ has zero payoff in the case that the $i$-th currency completely devalues, then \eqref{eq:220815} holds. Hence, in this case one can compute $ \E^{\overline \Q} [\overline C]$  by only computing the corresponding valuation expectation with the $i$-th currency as num\'eraire.

In the terminology of  \citet{Schoenbucher_2003, Schoenbucher_2004}, $\Q_i$ is called a ``survival measure'' (corresponding to the $i$-th currency) as it is equivalent to the probability measure $\overline \Q$, conditioned on the $i$-th currency not completely devaluating.

\subsection{Comparison with the classical setup} \label{SS:4.1.5} 
In this subsection we compare the aggregation results of Theorem~\ref{T:extending_val_operator_prelim} with the classical valuation formulas in mathematical finance. As it is customary in the classical setup, we fix a num\'eraire; say, the one corresponding to the first row of the exchange matrix. Moreover, we assume that $S_1$, the vector of prices quoted with respect to this num\'eraire, is  a $\Prob_1$--semimartingale for some probability measure $\Prob_1$.  In order to clarify this comparison we now consider three different cases. 
\begin{enumerate}
\item Suppose that there exists a probability measure $\Q_1\sim\Prob_1$ such that $S_1$ is a $\Q_1$--martingale and that $S_1>0$ under $\Prob_1$. In the terminology of~\cite{Yan_new}, in this case the market is \emph{fair}. In particular, NFLVR for admissible strategies, as defined in~\cite{DS_fundamental}, holds with respect to $\Prob_1$. We can define a num\'eraire-consistent family of probability measures $(\Q_i)_i$ through the change-of-num\'eraire formula
\begin{equation}\label{eq:classicalRN}
\frac{\dd \Q_i}{\dd \Q_1} = S_{i,1}(0) S_{1,i}(T).
\end{equation}
Indeed, we have
\begin{displaymath}
\E^{\Q_i}[S_{i,j}(t) \1_A]= S_{i,1}(0)\E^{\Q_1}[S_{1,i}(t)S_{i,j}(t) \1_A]=S_{i,1}(0)\E^{\Q_1}[S_{1,j}(t) \1_A]=S_{i,j}(0){\Q_j}(A)
\end{displaymath}
for all $i,j$, $t$, and $A\in\mathcal F(t)$. Observe that $\Q_i\sim\Q_j$  for all  $i,j$ and $\sum_i \Q_i  \sim\Prob_1$.  In this case, we have \eqref{eq:220815} for all $i$, $r$, and $C \in \mathcal C$ such that $C_i\in L^{1}(\Q_i)$ for all $i$.

\item Suppose now that there exists a probability measure $\Q_1\sim\Prob_1$ such that $S_1$ is a $\Q_1$--martingale but $S_1$ is not necessarily positive under $\Prob_1$. In this case, the market is also fair in the terminology of~\cite{Yan_new}.  However, the $i$-th currency might not be a classical num\'eraire, as it does not stay strictly positive. Nevertheless, an interpretation as ``defaultable num\'eraire'' is possible and we can again define a family of probability measures thorough~\eqref{eq:classicalRN}. We observe that
now $\Q_i$ is not necessarily equivalent to $\Q_1$, but only absolutely continuous with respect to $\Q_1$ for all $i$ and $\sum_i \Q_i \sim\Q_1$. The family of probability measures is num\'eraire-consistent because
\begin{displaymath}
\begin{split}
\E^{\Q_i}[S_{i,j}(t) \1_A]&= S_{i,1}(0)\E^{\Q_1}[S_{1,i}(t)S_{i,j}(t) \1_{A\cap\{S_{i,j}(t)<\infty\}}]=S_{i,j}(0)\Q_j(A\cap\{S_{i,j}(t)<\infty\})\\&=S_{i,j}(0)\Q_j(A\cap\{S_{j,i}(t)>0\}).
\end{split}
\end{displaymath}
It is possible that $S_i$ is  not a $\Q_i$--local martingale but only a $\Q_i$--supermartingale. Moreover,  NFLVR as in~\cite{DS_fundamental} does not necessarily hold with respect to $\Q_i$.

Due to Theorem~\ref{T:extending_val_operator_prelim}\ref{T:extending_val_operator_prelim:c}, 
we have 
\begin{align} \label{eq:170310.1}
	 \E^{\overline \Q}_r [\overline C]= \overline S_1 (r) \E^{\Q_1}_r[C_1]
\end{align}
for all $r$
and $C\in \mathcal{C}$ such that $\overline{C}\in L^{1}(\overline \Q)$. With respect to the other currencies $i \geq 2$, however, \eqref{eq:220815} is not necessarily true. Indeed, \eqref{eq:170310.1} and~\eqref{eq:consistent2b} enable us to write the valuation, expressed in terms of the $i$-th num\'eraire, on the event $\{i \in \mathfrak A(r)\}$, as
\begin{displaymath}
\begin{split}
 \frac{ \E^{\overline \Q}_r [\overline C]}{\overline S_i(r)}  &= S_{i,1} (r) \E^{\Q_1}_r[C_1]=  \E^{\Q_i}_r[S_{i,1}(T)C_1]+  S_{i,1} (r) \E^{\Q_1}_r[C_1\1_{\{S_{1,i}(T)=0\}}]\\&= \E^{\Q_i}_r[ C_i]+  S_{i,1} (r) \E^{\Q_1}_r[C_1\1_{\{S_{1,i}(T)=0\}}]\end{split}
\end{displaymath}
for all $r$, $i$, and $C\in \mathcal{C}$ such that $\overline{C}\in L^{1}(\overline \Q)$. Therefore, if the valuation formula is expressed in terms of the $i$-th num\'eraire, the conditional expected value must be adjusted by a factor depending on the $\Q_1$--probability of devaluation of the $i$-th currency. Let us now fix $i$ and $C \in \mathcal C$, with $\overline{C}\in L^{1}(\overline \Q)$, and study the correction term
\begin{align*}
	F_i(r) = S_{i,1} (r)  \E^{\Q_1}_r[C_1\1_{\{S_{1,i}(T)=0\}}]
\end{align*}
for all $r$.  Then $F_i(T) = S_{i,1}(T) C_1 \1_{\{S_{i,1}(T)=\infty\}} = 0$ under $\Q_i$. 
Moreover, for $r_1 \leq r_2$ we have
\begin{align*}
	\E^{\Q_i}_{r_1}[F_i(r_2)] = S_{i,1}(r_1) \E^{\Q_1}_{r_1}\left[\1_{\{S_{1,i}(r_2)>0\}} \E^{\Q_1}_{r_2}[C_1\1_{\{S_{1,i}(T)=0\}}]\right] \leq F_i(r_1)
\end{align*}
under $\Q_i$. Hence the correction term $F_i$ is a $\Q_i$--potential.  More precisely, by the same computations, $F_i$ is indeed a  $\Q_i$--local martingale, as long as $S_{1,i}$ does not jump to zero under $\Q_1$.  Hence, under this no-jump condition, valuation  yields a local martingale, whose Riesz decomposition equals conditional expectation under $\Q_i$ plus the correction term $F_i$.

\item Suppose now that there does not exist a measure $\Q_1\sim\Prob_1$ such that $S_1$ is a $\Q_1$--martingale. In this case, according to Theorem 3.2 in~\cite{Yan_new}, NFLVR for $\Prob_1$--allowable strategies\footnote{A trading strategy is $\Prob_1$--allowable if the corresponding wealth process is bounded from below by a  constant times  $\sum_j S_{1,j}$, $\Prob_1$--almost surely.  See \cite{Yan_new} for the precise definition.} does not hold.  Let us assume, nevertheless, that there exists a family of num\'eraire-consistent probability measures $(\Q_i)_i$. As $S_1$ is only a $\Q_1$--supermartingale, and there does not necessarily exist a probability measure equivalent to $\Prob_1$ under which $S_1$ is a local martingale, NFLVR with respect to admissible strategies, as defined in~\cite{DS_fundamental}, might fail, nevertheless. However, thanks to Theorem~\ref{T:extending_val_operator_prelim}, NFLVR  holds if the basket is chosen as num\'eraire, under the probability measure $\overline \Q \sim \sum_i \Q_i$. This extended model would depart from the classical setup as  $\overline \Q(S_{1,i}(T)=\infty)>0$ for some $i$. 

The representation in \eqref{eq:extension_val_operator} provides an economically meaningful valuation formula in this setup. This is possible even if, after fixing the first currency as num\'eraire, NFLVR with respect to $\Prob_1$--allowable strategies (or even admissible strategies) is violated. The key is to consider non-equivalent probability measures that allow for devaluations of the first currency and to take consistently these states of the world into account for arbitrage considerations and valuation of options. As we illustrate in Example~\ref{ex:cosistent_2assets} below, these valuation formulas can be applied to strict local martingale models and yield values consistent with market conventions such as put-call parity. This valuation framework corrects the deficiencies of the classical conditional expectation approach for valuation. 

\end{enumerate} 

\subsection{Examples}  \label{SS:4.2}

 As already pointed out in \citet{Lewis},~\citet{CH},~\citet{MadanYor_Ito}, and~\citet{CFR2011}, among others, a strict local martingale measure is not always suitable for valuation purposes because values computed through expectations with this measure fail to be in accordance with market conventions such as put-call-parity. The works of  \citet{Lewis} and~\citet{MadanYor_Ito} propose ad-hoc correction terms to solve these deficiencies. Similarly to the study in~\cite{CFR2011}, we recognize that the problems arise from the fact that a strict local martingale measure does not take into account the states of the world where the corresponding currency devalues.  
  
\begin{ex}[The case $d=2$]\label{ex:cosistent_2assets}
Consider an economy with  $d=2$ currencies and assume a num\'eraire-consistent family $(\Q^1, \Q^2)$ exists. Indeed,  given a probability measure $\Q^1$ such that $S_{1,2}$ is a nonnegative $\Q^1$--supermartingale, a probability measure $\Q^2$ can be constructed such that $(\Q^1, \Q^2)$ is num\'eraire-consistent, for example by the approach pioneered in \citet{F1972}; see also \citet{Perkowski_Ruf_2014}.

 In the following we derive a representation of ${\overline \Q}$, the valuation measure with the basket as underlying, whose existence is guaranteed by Theorem~\ref{T:extending_val_operator_prelim}\ref{T:extending_val_operator_prelim:b}.
To this end, fix a time $r$ and  a contingent claim $C \in \mathcal C$ such that $C_i\in L^{1}(\Q_i)$ for all $i$. We then have
\begin{align}  
\E^{\overline \Q}[\overline C]&= \overline S_{1}(r)\E^{\Q_1}_r\left[\frac{C_1}{|\mathfrak{A}(T)|}\right]+\overline S_{2}(r)\E^{\Q_2}_r\left[\frac{C_2}{|\mathfrak{A}(T)|}\right] \nonumber \\
&= \overline S_{1}(r)\left(\E^{\Q_1}_r\left[\frac{C_1}{2} \1_{\{S_{1,2}(T)>0\}}\right]+\E^{\Q_1}_r\left[C_1 \1_{\{S_{1,2}(T)=0\}}\right]\right) \nonumber \\
&\quad+\overline S_{2}(r)\left(\E^{\Q_2}_r\left[\frac{C_2}{2} \1_{\{S_{1,2}(T)<\infty\}}\right]+\E^{\Q_2}_r\left[C_2 \1_{\{S_{1,2}(T)=\infty\}}\right]\right)\nonumber \\
&=\overline S_{1}(r)\E^{\Q_1}_r[C_1  ]+\overline S_{2}(r)\E^{\Q_2}_r[C_2  \1_{\{S_{1,2}(T)=\infty\}}].  \label{eq:220715}
\end{align}
Here we used \eqref{eq:consistent2b} (applied with $j=1$, $i=2$, and $X=C_1/2$) to deduce that
$$\overline S_{2}(r)\E^{\Q_2}_r\left[\frac{C_2}{2}  \1_{\{S_{1,2}(T)<\infty\}}\right]=\overline S_{1}(r)\E^{\Q_1}_r\left[\frac{C_1}{2}  \1_{\{S_{1,2}(T)>0\}}\right].$$

Expressing \eqref{eq:220715} in terms of the first num\'eraire, we then get, on the event $\{1 \in \mathfrak A(r)\}$,
\begin{align} \label{eq:170310.2}
\frac{\E^{\overline \Q}[\overline C]}{\overline S_1(r)} = \E^{\Q_1}_r[C_1  ]+ S_{1,2}(r)\E^{\Q_2}_r[C_2  \1_{\{S_{1,2}(T)=\infty\}}].  
\end{align}
This corresponds exactly to the valuation formula in \citet{CFR2011}, constructed to restore put-call parity in a strict local martingale model.  The right-hand side of \eqref{eq:170310.2} is the sum of two terms.  The first term  is the valuation expectation of the contingent claim if the first currency is chosen as num\'eraire. The second term can be interpreted as a correction factor.  It is a product of the exchange rate, converting units of the second currency into units of the first currency, and another valuation expectation.  This time, the valuation expectation is taken using the second currency as num\'eraire.  It considers the contingent claim on the event where the first currency completely devalues. In the case when the contingent claim $C$ is a European call (with the first currency chosen as num\'eraire), this second term corresponds exactly to the ad-hoc correction in \citet{Lewis}.  Thus,  \eqref{eq:170310.2}
 retrieves exactly the valuation formulas in   \citet{Lewis}, \citet{MadanYor_Ito},  \citet{Paulot}, and \cite{Kardaras:2015:valuation}. We also refer to Section~6 in \cite{Herdegen:Schweizer:2017} for an alternative approach based on well chosen no-arbitrage principles.
 \qed
\end{ex}

We next study the extension of the Black-Scholes-Merton model proposed in \cite{Camara:Heston}. They suggest to augment the original Black-Scholes-Merton model by allowing  the relative prices to jump to zero and infinity.  The jump to zero ``adjust[s] the Black-Scholes model for biases related with out-of-the-money put options'', and the jump to infinity 
``captures the exuberance and the extreme upside potential of the market and leads to a risk-neutral density with more positive skewness and kurtosis than the density implicit in the Black-Scholes model.''
\cite{Camara:Heston} then illustrate that such a modification indeed yields an implied volatility which is closer to the ones observed in the market.
\begin{ex}[Black-Scholes with jumps to zero and infinity]\label{ex:camaraHeston}
We consider again two currencies, that is, $d=2$.  We assume that the relative prices are described through the Black-Scholes model; however, now with the additional feature that the price may either jump to zero or infinity at some exponential time.  We introduce the model formally by specifying a probability measure $\Prob$ on $(\Omega, \mathcal F(T))$. Towards this end, suppose that  $\tau_1$ and $\tau_2$ are exponentially distributed stopping times with intensity $\lambda^{\Prob}_1$ and $\lambda^{\Prob}_2$, respectively, and satisfy  $\Prob(\tau_1=\tau_2)=0$. 
We then set \[ S_{1,2}(t)= S_{1,2}(0)\exp\left(\sigma W(t) - \frac{\sigma^2}{2} t +\mu t \right)\1_{\{t< \tau_1 \wedge \tau_2\}}  + \infty \1_{\{\tau_1\leq t\wedge \tau_2\}} , \]
where $\mu$, $\sigma \in \R$ are constant with $\sigma \neq 0$ and $W$ is a $\Prob$--Brownian motion, independent of $\tau_1$ and $\tau_2$.
This yields directly
\[S_{2,1}(t)=S_{2,1}(0) \exp\left(-\sigma W(t) + \frac{\sigma^2}{2}t  -\mu t\right)\1_{\{t<\tau_1 \wedge \tau_2 \} }  + \infty \1_{\{\tau_2\leq t\wedge\tau_1 \}}.\]
Thus, on the event $\{\tau_1<\tau_2\}$, the first currency devalues completely at time $\tau_1$, while on $\{\tau_2<\tau_1\}$ the second currency devalues completely at time $\tau_2$.

We now want to construct a valuation measure  $\overline \Q$ with respect to the basket. Towards this end, we first construct a num\'eraire-consistent family of probability measures $(\Q_1, \Q_2)$ and then apply Theorem~\ref{T:extending_val_operator_prelim}\ref{T:extending_val_operator_prelim:b}.  
In particular, under $\Q_1$ the process $S_{1,2}$ stays real-valued and is a supermartingale; a similar statement holds for $\Q_2$. 
To start, we  define the probability measures $\Prob_1$ and $\Prob_2$ by 
\begin{align}
\frac{\dd  \Prob_1}{\dd \Prob}&=\frac{\1_{\{\tau_1>\tau_2 \wedge T\}}}{\Prob(\tau_1 > \tau_2 \wedge T|\tau_2)} = \1_{\{\tau_1>T\wedge\tau_2\}}\mathrm e^{\lambda^{\Prob}_1(T\wedge\tau_2)};  \label{eq:ProbExCamHeston1}\\
\frac{\dd  \Prob_2}{\dd \Prob} &=\frac{\1_{\{\tau_2>\tau_1 \wedge T\}}}{\Prob(\tau_2 > \tau_1 \wedge T|\tau_1)} = \1_{\{\tau_2>T\wedge\tau_1\}}\mathrm e^{\lambda^{\Prob}_2(T\wedge\tau_1)}.  \label{eq:ProbExCamHeston2}
\end{align}

We next fix some, for the moment arbitrary,  constants  $\mu_1, \mu_2 \in \R$ and $\lambda_1, \lambda_2>0$ and define the probability measures $\Q_1$ and $\Q_2$ by 
\begin{align}\label{eq:QprobExCamHeston1}
\frac{\dd \Q_1}{\dd  \Prob_1}&=\mathcal{E}\left(\left(\frac{\mu_1 - \mu}{\sigma}\right) W\right)(T)\,\, \mathrm e^{(\lambda^{\Prob}_2-\lambda_2)(T\wedge\tau_2)}\left(\frac{\lambda_2}{\lambda^{\Prob}_2}\right)^{\1_{\{\tau_2\leq T\}}};\\ 
\label{eq:QprobExCamHeston2}
\frac{\dd \Q_2}{\dd  \Prob_2}&=\mathcal{E}\left(\left(\frac{\mu_2 - \mu +\sigma^2}{\sigma}\right) W\right)(T)\,\,\mathrm e^{(\lambda^{\Prob}_1-\lambda_1)(T\wedge\tau_1)}\left(\frac{\lambda_1}{\lambda^{\Prob}_1}\right)^{\1_{\{\tau_1\leq T\}}}.
\end{align}
Then the $\Q_1$--intensity of $\tau_2$ equals $\lambda_2$ and the $\Q_2$--intensity of $\tau_1$ equals $\lambda_1$. Moreover, we get
\begin{align}\label{eq:SExCamHeston1}
S_{1,2}(t)&= S_{1,2}(0) \exp\left(\sigma W_1(t) -\frac{\sigma^2 }{2} t + \lambda_2 t \right)  \1_{\{t< \tau_2 \} } \mathrm e^{ \mu_1  t - \lambda_2 t }, \qquad \text{$\Q_1$--almost surely}; \\
\label{eq:SExCamHeston2}
S_{2,1}(t)&= S_{2,1}(0) \exp\left(\sigma W_2(t) -\frac{\sigma^2 }{2}  t + \lambda_1 t \right)  \1_{\{t< \tau_1 \} } \mathrm e^{-\mu_2 t -  \lambda_1  t }, \qquad \text{$\Q_2$--almost surely}
\end{align}
for all $t$,
with $W_1$ a $\Q_1$--Brownian Motion independent of $\tau_2$ and $W_2$ a $\Q_2$--Brownian motion independent of $\tau_1$.   
It is clear that it is necessary to have
$\lambda_1 \geq -\mu_2$ and $\lambda_2 \geq \mu_1$
for the supermartingale property of $S_{1,2}$ and $S_{2,1}$, respectively.

Fix now $t \in [0,T]$ and  $A\in\mathcal{F}(t)$. Then, by \eqref{eq:QprobExCamHeston1}--\eqref{eq:QprobExCamHeston2}, \eqref{eq:ProbExCamHeston1}--\eqref{eq:ProbExCamHeston2}, and~\eqref{eq:SExCamHeston1}--\eqref{eq:SExCamHeston2}
\begin{align*}
\Q_1(A\cap \{S_{1,2}(t)>0\}) &=\E^{\Prob}\left[\mathcal{E}\left(\left(\frac{\mu_1 - \mu}{\sigma}\right) W\right)(t)\,\,  \mathrm e^{(\lambda_1^{\Prob}+\lambda^{\Prob}_2 - \lambda_2 )t}\1_{\{t< \tau_1\wedge\tau_2 \} }\1_A\right];\\
S_{1,2}(0) \E^{\Q_2}[S_{2,1}(t)\1_A] &=\E^{\Prob}\left[\mathcal{E}\left(\left(\frac{\mu_2 - \mu}{\sigma}\right) W\right)(t)\,\,  \mathrm e^{(\lambda_1^{\Prob}+\lambda^{\Prob}_2 - \mu_2  - \lambda_1 ) t}\1_{\{t< \tau_1\wedge\tau_2 \} }\1_A\right].
\end{align*} 
This yields that for \eqref{eq:consistent2a} to hold we need to impose that
\[\lambda_2-\lambda_1=\mu_1 = \mu_2.\]
Indeed, this is  sufficient for the num\'eraire-consistency of $(\Q_1, \Q_2)$ because then 
also, in the same manner, 
\begin{align*}
S_{2,1}(0) \E^{\Q_1}[S_{1,2}(t)\1_A] &=
\Q_2(A\cap \{S_{2,1}(t)>0\}).
\end{align*} 
Theorem~\ref{T:extending_val_operator_prelim}\ref{T:extending_val_operator_prelim:b} now yields a valuation measure  $\overline \Q$ with respect to the basket, corresponding to the family  $(\Q_1, \Q_2)$.

Consider next an exchange option $C = (C_1, C_2)$ with $C_1=(S_{1,2}(T)-K)^+$ and  $C_2=(1-KS_{2,1}(T))^+$, where $K \in \R$. That is, at time $T$, the option gives the right to swap  $K$ units of the first currency into one unit of the second currency.  Then the representation of $\E^{\overline \Q}$ in \eqref{eq:220715} of Example~\ref{ex:cosistent_2assets} yields
\begin{align}
\E^{\overline \Q}[\overline C]&= \overline S_1(0) \E^{\Q_1}[(S_{1,2}(T)-K)^+\1_{\{\tau_2>T\}}]+ \overline S_{2}(0)\Q_2(\tau_1\leq T) \nonumber \\
&=  \overline S_1(0) \Q_1(\tau_2>T)\E^{\Q_1}\left[\left(S_{1,2}(0)\mathrm e^{\sigma W_1(T)+(\lambda_2-\lambda_1-{\sigma^2}/2)T}-K\right)^+\right] + \overline S_2(0)(1-{\rm e}^{-\lambda_1 T}) \nonumber\\
&= \overline S_2(0)  {\rm e}^{-\lambda_1 T} \Phi(d_1)- \overline S_1(0)  K{\rm e}^{-\lambda_2T}\Phi(d_2)+\overline S_2 (0)(1-{\rm e}^{-\lambda_1 T}),
\label{eq:camraheston1}
\end{align}
where 
\begin{equation*}
d_1=\frac{1}{\sigma\sqrt{T}}\left(\ln\left(\frac{S_{1,2}(0)}{K}\right)+\left(\lambda_2-\lambda_1+\frac{\sigma^2}{2}\right)T\right); \qquad
d_2=d_1-\sigma\sqrt{T}
\end{equation*}
and 
$\Phi$ is the standard normal cumulative distribution function. For the last equality in \eqref{eq:camraheston1}, we have used the standard Black-Scholes-Merton formula with interest rate $\lambda_2 - \lambda_1$. 

The expression in \eqref{eq:camraheston1}  corresponds to formula~(16) in \cite{Camara:Heston}.   That formula has been derived via solving a partial integral differential equation.  In contrast,  \eqref{eq:camraheston1}  has been derived by a purely probabilistic approach based on equivalent supermartingale measures.  Note that the use of a num\'eraire-consistent family yields a systematic way to value more complicated, possibly path-dependent contingent claims in the C\^amara-Heston framework.  Moreover, this example also shows that the C\^amara-Heston framework is free of arbitrage, in the sense that there exists a valuation measure $\overline \Q$ with respect to the basket.  Due to the presence of a jump to zero and due to the incompleteness of the model this example is not covered by \cite{CFR2011}.

We emphasize that this approach is not restricted to the Black-Scholes model. One might take any model, for example the Heston model, and then add a jump to zero and a jump to infinity. Going through the same steps as in this example then yields a num\'eraire-consistent family  that corrects deep out-of-the money puts and call prices. 
\qed
\end{ex}

\subsection{Aggegration without num\'eraire-consistency}  \label{SS:4.3}

Theorem~\ref{T:extending_val_operator_prelim}\ref{T:extending_val_operator_prelim:b} yields that,  given a num\'eraire-consistent family of probability measures $(\Q_i)_i$, there exists a  valuation measure $\overline \Q$ with respect to the basket. In practice it might be difficult to decide whether a  given family of probability measures $(\Q_i)_i$ is num\'eraire-consistent. Thus, the question arises, under which conditions the existence of a not necessarily num\'eraire-consistent family of probability measures yields the existence of a valuation measure $\overline \Q$ with respect to the basket.  The next theorem  provides more easily verifiable conditions  for an arbitrary family of probability measures  $(\Q_i)_i$.  It requires the following definition.

\begin{defin}[NOD]  \label{D:NOD}
We say that a probability measure $\Prob$ on $(\Omega, \mathcal F(T))$ satisfies No Obvious Devaluations (NOD) if $\Prob(i \in \mathfrak{A}(T) |\mathcal F(\tau)) >0$ on $\{\tau < \infty\} \cap \{i\in \mathfrak{A}(\tau)\}$, $\Prob$--almost surely,  for all $i$ and  stopping times $\tau$. \qed
\end{defin}

A probability measure $\Prob$ that satisfies NOD guarantees the following.  If at any point of time $\tau$ a certain currency $i$ has not yet defaulted then the probability is strictly positive that this currency will not default in the future. \citet{CFR2011} study the case $d=2$ and also introduce the notion of ``no obvious hyperinflations'', seemingly different.  That paper, however, has an additional standing hypothesis, namely that there are no sudden complete devaluations through a jump (see Definition~\ref{D:NSD} below). Under this condition, their notion of ``no obvious hyperinflations'' and this paper's notion NOD agree.

\begin{thm}[Aggregation without num\'eraire-consistency]\label{T:1'}
Let $(\Q_i)_i$ be a family of probability measures. Then there exists a  valuation measure $\overline \Q \sim \sum_i \Q_i$  with respect to the basket
if one of the following two conditions is satisfied.
	\begin{enumerate}[label={\rm(\alph{*})}, ref={\rm(\alph{*})}]
		\item\label{T:1':a}   $S_i$ is a $\Q_i$--martingale for each $i$.
		\item\label{T:1':b}   The following four conditions hold.
			\begin{enumerate}[label={\rm(\roman{*})}, ref={\rm(\roman{*})}]
					\item\label{T:1':0i}  $S_i$ is a $\Q_i$--local martingale for each $i$.
					\item\label{T:1':i}  $\sum_i \Q_i/d$ satisfies NOD; see Definition~\ref{D:NOD}.
					\item\label{T:1':ii}  For each $i$,
					$$\Q_i|_{\mathcal F \cap \{\sum_j S_{i,j}(T) < \infty\}} \sim \left.\sum_k \Q_k \right|_{\mathcal{F} \cap \{\sum_j S_{i,j}(T) < \infty\}}.$$
					\item\label{T:1':iii}  There exist $\epsilon>0$, $N\in \N$, and predictable times $(T_n)_{n \in \{1, \cdots, N\}}$ such that
\begin{align*}
	\bigcup_i \left\{(t,\omega): \sum_j S_{i,j}(t)=\infty \text{  and  }  \sum_j S_{i,j}(t-) \leq d+\varepsilon \right\}  \subset \bigcup_{n =1}^N \lc T_n\rc,
\end{align*}
$(\sum_i \Q_i/d)$--almost surely.
	\end{enumerate}
\end{enumerate}
\end{thm}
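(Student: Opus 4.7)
I would treat the two cases separately, using part~\ref{T:1':a} as a clean prototype which is then localized for part~\ref{T:1':b}.

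\emph{Part~\ref{T:1':a}.} Since each $S_{i,j}$ is a $\Q_i$--martingale, so is $\sum_k S_{i,k}=1/\overline S_i$, and therefore $\E^{\Q_i}[\overline S_i(0)/\overline S_i(T)]=1$.  I would then define equivalent probability measures $\overline \Q^{(i)}\sim \Q_i$ via
\begin{equation*}
\frac{\dd \overline \Q^{(i)}}{\dd \Q_i}=\frac{\overline S_i(0)}{\overline S_i(T)}.
\end{equation*}
Its $\mathcal F(s)$--density equals $\overline S_i(0)/\overline S_i(s)$, so combining Bayes' rule with the identity $\overline S_j=S_{i,j}\overline S_i$ (valid $\Q_i$--a.s., because non-negative local martingales are finite and hence $\Q_i(i\in\mathfrak A(t))=1$) yields $\E^{\overline \Q^{(i)}}[\overline S_j(t)\mid\mathcal F(r)] = \overline S_i(r)\E^{\Q_i}[S_{i,j}(t)\mid\mathcal F(r)] = \overline S_j(r)$ for every $j$.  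Hence $\overline S$ is a $\overline \Q^{(i)}$--martingale, and setting $\overline \Q:=(1/d)\sum_i \overline \Q^{(i)}$ preserves the martingale property (any convex combination of martingale measures remains one for the same process); the equivalences $\overline \Q^{(i)}\sim \Q_i$ then give $\overline \Q\sim \sum_i \Q_i$.

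\emph{Part~\ref{T:1':b}.} Now each $S_i$ is only a $\Q_i$--local martingale, so $\overline S_i(0)/\overline S_i(T)$ is a $\Q_i$--supermartingale whose mass $\leq 1$ reflects the deficit carried by devaluation events.  My plan is to localize, apply part~\ref{T:1':a} to the stopped systems, and patch the resulting measures.  Combining an announcing sequence for the predictable times $T_1,\ldots,T_N$ from~\ref{T:1':iii} with the hitting time of $\{\sum_j S_{i,j}\geq n\}$ yields reducing stopping times $\sigma_n$ on which each $(\sum_j S_{i,j})^{\sigma_n}$ is a true $\Q_i$--martingale, since within $[0,\sigma_n]$ any jump-to-infinity from a value $\leq d+\varepsilon$ would have to occur at some $T_k$ (excluded by the announcing), while jumps from values exceeding $d+\varepsilon$ are excluded by the level-crossing for $n>d+\varepsilon$.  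Applying part~\ref{T:1':a} to each stopped system produces measures $\overline \Q_n$ making $\overline S^{\sigma_n}$ a martingale; hypothesis~\ref{T:1':ii} supplies the compatibility needed between the different $\Q_i$'s on no-devaluation events, so the $\overline \Q_n$ admit a monotone Radon--Nikodym limit $\overline \Q$ against $\Prob:=(1/d)\sum_k \Q_k$.  Property~\ref{T:1':i} (NOD) guarantees that $\overline \Q$ charges every stratum $\{\mathfrak A(T)=J\}$ charged by $\Prob$, giving $\overline \Q\sim \sum_i \Q_i$; the local martingale property for $\overline S$ extends from the stopped pieces by a monotone class argument using $0\leq \overline S\leq 1$.

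The main obstacle is the passage to the limit at each predictable devaluation time $T_n$.  Because the announcing sequence stops $\overline \Q_n$ strictly before any devaluation, the mass that ought to live on $\{T_n\leq T,\ i\notin\mathfrak A(T_n)\}$ is absent from each approximation, and one must show it is exactly recovered in the limit.  The bounded pre-jump hypothesis in~\ref{T:1':iii}, together with the supermartingale convergence theorem, is what forces this missing mass to equal the mass that a surviving currency $j$ assigns to the corresponding post-jump event, while~\ref{T:1':ii} allows the transfer to be performed consistently between $\Q_i$ and $\Q_j$ across the devaluation.  Verifying this transfer identity rigorously, and then checking that the limiting $\overline \Q$ indeed makes $\overline S$ a martingale across $T_n$ rather than merely on either side, will be the crux of the argument.
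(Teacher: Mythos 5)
Your part~\ref{T:1':a} is correct and is exactly the paper's argument: your measures $\overline\Q^{(i)}$ coincide with the paper's $\widetilde\Q_i$ (since $\overline S_i=1/\sum_j S_{i,j}$), and averaging martingale measures for the bounded process $\overline S$ is unproblematic.

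Part~\ref{T:1':b}, however, has a genuine gap, and it sits exactly where you yourself locate ``the crux''. The localize-and-pass-to-the-limit scheme cannot work as stated. First, your reducing times $\sigma_n$ (hitting times of $\{\sum_j S_{i,j}\geq n\}$ intersected with announcing sequences for the $T_k$) do not exhaust $[0,T]$ under $\Prob=\sum_i\Q_i/d$: on the event that some currency $i$ devalues continuously, or jumps to $\infty$ from a level above $d+\varepsilon$ at a totally inaccessible time --- both permitted by \ref{T:1':iii} --- one has $\sup_n\sigma_n$ strictly below the devaluation time. That event is $\Q_i$-null but can carry positive $\Prob$-mass, so the stopped systems never see it and no limit of the $\overline\Q_n$ can recover the martingale property of $\overline S$ across such devaluations. (Separately, stopping at the hitting time of level $n$ does not bound the stopped process when there are jumps, so the claim that $(\sum_j S_{i,j})^{\sigma_n}$ is a true $\Q_i$--martingale is itself unjustified.) Second, even where the $\overline\Q_n$ exist, you give no argument that their densities converge, that no mass escapes, that the limit is a probability measure equivalent to $\sum_i\Q_i$, or that $\overline S$ is a martingale (not merely a local martingale) under it; ``monotone Radon--Nikodym limit'' is asserted, not proved, and the transfer identity at the predictable times is explicitly left open.

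The paper's proof supplies precisely the missing ideas, and they are different in kind from localization. It builds a single candidate density process $Z$ for $\overline\Q$ with respect to $\Prob$ by \emph{switching between currencies}: follow the deflator $(\overline S_{i_n})^{-1}Z_{i_n}$ of the currently strongest currency $i_n$ (so that $\overline S_{i_n}(\tau_{n-1})\geq 1/d$) and switch as soon as $(\overline S_{i_n})^{-1}$ exceeds $d+\varepsilon$, i.e.\ well before $i_n$ can devalue except possibly at the predictable times. Conditions \ref{T:1':i} and \ref{T:1':ii} are used exactly to guarantee $Z_{i_n}(\tau_{n-1})>0$, so the switch is admissible; the estimate $\Q^n(\tau_n\leq T)\leq\eta^n$ with $\eta<1$, together with an extension of Lemma~III.3.3 of Jacod--Shiryaev, controls the accumulation of switching times and upgrades local to true martingality; and the predictable times are handled not by avoiding them via announcing sequences but by replacing the jump of the stochastic logarithm of $Z$ at $T_m$ with the average jump of the deflators of the currencies still active at $T_m-$, the martingale property there following from the predictable stopping theorem. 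To complete your proof you would essentially have to import this switching construction; the per-measure localization and the averaging of part~\ref{T:1':a} do not substitute for it.
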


As Example~\ref{Ex:4.9} below illustrates, Theorem~\ref{T:1'}\ref{T:1':b} is not sufficient for the existence of a valuation measure for the basket, in general, without \ref{T:1':b}\ref{T:1':0i}, namely that  $S_i$ is a $\Q_i$--local martingale for each $i$.  The condition in Theorem~\ref{T:1'}\ref{T:1':b}\ref{T:1':i} states that $\sum_i \Q_i/d$ must satisfy the minimal no-arbitrage condition given by NOD --- the selling of an active currency does not yield a simple arbitrage strategy.  Indeed, it is clear that this condition  is necessary.  As Example~\ref{Ex:2aa} below illustrates, the conclusion of   Theorem~\ref{T:1'} is wrong without \ref{T:1':b}\ref{T:1':i}. Thus, given the other conditions, it is not redundant for the formulation of the theorem. The condition in Theorem~\ref{T:1'}\ref{T:1':b}\ref{T:1':ii}  means that the support of $\Q_i$ is the event $\{\sum_{j} S_{i,j}(T)<\infty\}$ for each $i$. The necessity of such a condition is the content of Example~\ref{Ex:2} below. 

Theorem~\ref{T:1'}\ref{T:1':b}\ref{T:1':iii} is a technical condition and we do not know whether it is necessary for the statement of the theorem to hold.    This condition allows the $k$-th currency to devalue suddenly, as long as it is not  ``strong''  in the sense  $\sum_j S_{i,j} \leq d+\varepsilon$.   If, however, a ``strong'' currency devalues suddenly, it only is allowed to do so at a finite number of fixed, predictable times.  In particular, any discrete-time model with finitely many time steps satisfies this condition.  This condition also holds if $\sum_i \Q_i/d$ satisfies NSD, in the sense of the following definition. 

\begin{defin}[NSD]\label{D:NSD}
We say that a probability measure $\Prob$ satisfies No Sudden Devaluation  (NSD) if $\Prob(S_{i,j} \text { jumps to }\infty) = 0$ for all $i,j$.
\qed
\end{defin}

Under NSD no currency devalues completely against any other currency suddenly.
 Example~\ref{Ex:2aa} below illustrates that there exists a probability measure $\Prob$ that satisfies NSD but not NOD. 	 It is simple to construct an example that satisfies NOD but not NSD.

\begin{ex}[On the necessity of  Theorem~\ref{T:1'}\ref{T:1':b}\ref{T:1':0i}]  \label{Ex:4.9}
Fix $T=d=2$ and $\Omega = \{\omega_1, \omega_2\}$ along with $\mathcal F(t) = \{\emptyset, \Omega\}$ for all $t <1$ and $\mathcal F(t) = \{\emptyset, \Omega, \{\omega_1\}, \{\omega_2\}\}$ for all $t \geq 1$. Let $S_{1,2}(\omega_1, t) = 1$ and $S_{1,2}(\omega_2, t) \equiv \1_{t <1}$ for all $t$.  That is, two states of the world are possible; up to time $1$ the exchange rate between the two currencies stays constant, and at time one either the second currency devalues completely or nothing happens, depending on the state of the world. We now let  $\Q_1(\{\omega_1\}) = \Q_1(\{\omega_2\} )= 1/2$, and $\Q_2(\{\omega_1\} ) = 1$.  Then $S_{1,2}$ is a strict $\Q_1$--supermartingale and $S_{2,1}$ is a $\Q_2$--martingale. Moreover, all conditions in Theorem~\ref{T:1'}\ref{T:1':b}, apart from  \ref{T:1':0i}, are satisfied. However, selling one unit of the second currency and buying one unit of the first currency at time zero yields a nonnegative wealth process that is strictly positive in state $\omega_2$, which has strictly positive $(\Q_1+\Q_2)/2$--probability; thus a clear arbitrage. Thus, no valuation measure $\overline \Q \sim (\Q_1 + \Q_2)$  with respect to the basket can exist.  This illustrates that Theorem~\ref{T:1'} indeed needs the local martingale property, formulated in \ref{T:1':b}\ref{T:1':0i}, in its statement. \qed
\end{ex}

\begin{ex}[On the necessity of  Theorem~\ref{T:1'}\ref{T:1':b}\ref{T:1':i}]  \label{Ex:2aa}
 We slightly modify Example~\ref{Ex:4.9}. Again, fix  $T=d=2$ and assume that $(\Omega, \mathcal F(T), \Q_1)$ supports a Brownian motion $B$ started in zero and stopped when hitting $-1$, and an independent $\{0, 1\}$--distributed random variable $X$ with $\Q_1(X=0) = \Q_1(X=1) = 1/2$. Now, let $$S_{1,2}(t)  = 1 + \1_{\{X=1\}} B\left( \tan\left(0\vee \frac{\pi}{2}(t-1)\right)\right)$$ 
for all $t$. Now let $(\mathcal F(t))_t$ denote the smallest right-continuous filtration that makes $S_{1,2}$ adapted. Then $S_{1,2}$ is constant before time one and stays constant afterwards with probability $1/2$, but moves like a time-changed Brownian motion stopped when hitting zero, otherwise. We now set  $\Q_2 = \Q_1(\cdot | \{X=0\})$ and note that $S_{2,1}$ is a (constant) $\Q_2$--martingale. Then the conditions in  Theorem~\ref{T:1'}\ref{T:1':b}\ref{T:1':0i}, \ref{T:1':ii}, and \ref{T:1':iii} are all satisfied, but the same strategy as in the previous example yields an arbitrage, admissible when the basket is chosen to be the num\'eraire.  Thus, Theorem~\ref{T:1'}\ref{T:1':b}\ref{T:1':i} is necessary to make the theorem valid.  Note that  $(\Q_1 + \Q_2)/2$ satisfies NSD  but not NOD in this example. 
\qed
\end{ex}

\begin{ex}[On the necessity of  Theorem~\ref{T:1'}\ref{T:1':b}\ref{T:1':ii}]  \label{Ex:2}
With $d=2$ assets again, we now provide an example for a family  of  local martingale measures $(\Q_1, \Q_2)$ such that  $(\Q_1 + \Q_2)/2$ satisfies NSD and NOD, but no valuation measure $\overline \Q \sim (\Q_1 + \Q_2)$ exists with respect to the basket.
Fix $T=2$ and a filtered probability space $(\Omega, \mathcal F(2), (\mathcal F(t))_t, \Q_2)$ that supports a three-dimensional Bessel process $R$ starting in one. Next, let $\tau$ denote the smallest time that $R$ hits $1/2$; in particular, we have $\Q_2(\tau <T)>0$ and $\Q_2(\tau = \infty) >0$ .   Consider now the process 
\begin{align*}
	S_{1,2}= 1 + \left(R-\frac{1}{2}\right) \1_{\lc \tau, \infty\lc} >0.
\end{align*}
With  $\Q_1(\cdot) = \Q_2(\cdot|\{\tau = \infty\})$ we have $\Q_1(S_{1,2} = 1)=1$. Moreover,  $S_{2,1}$ is a $\Q_2$--local martingale and  $\mathfrak{A}(T) = \{1,2\}$. In particular, $(\Q_1 + \Q_2)/2$ satisfies NSD and NOD. 
However,  Proposition~\ref{P:cons}\ref{P:cons:3} yields that no num\'eraire-consistent family of probability measures can exist. Thus, 
 Theorem~\ref{T:extending_val_operator_prelim}\ref{T:extending_val_operator_prelim:a} yields that no valuation measure $\overline \Q \sim  (\Q_1 + \Q_2)$  with respect to the basket exists either. This shows that Theorem~\ref{T:1'}\ref{T:1':b} is not correct without the support condition in \ref{T:1':b}\ref{T:1':ii}.
\qed
\end{ex}

 \section{Proofs  of Proposition~\ref{P:cons} and Theorems~\ref{T:extending_val_operator_prelim}  and \ref{T:1'}}   \label{S:proofs}

\begin{proof}[Proof of Proposition~\ref{P:cons}] In the following we argue the three parts of the statement. 

	 \ref{P:cons:1}:  Fix $i$ and $j$ and  note that \eqref{eq:consistent2a} yields that $\Q_i(i\notin{\mathfrak{A}(t)})=0$ for all $t$.    Monotone convergence then yields
\begin{equation}\label{eq:consistent2a'}
        \E^{\Q_i}[S_{i,j}(t) X]= S_{i,j}(0) \E^{\Q_j}[ X \1_{\{S_{j,i}(t)>0\}}]
    \end{equation}
    for all bounded $\mathcal F(t)$--measurable random variables  $X$ and for all $ t$.   To show  \eqref{eq:consistent2b}, fix a bounded $\mathcal F(t)$--measurable random variable $X$, $A \in \mathcal F(r)$, and 
	$r \leq t$. We then have
    \begin{align*}
        \E^{\Q_i}[S_{i,j}(t) X  \1_{A} ] &=   \E^{\Q_i}[S_{i,j}(t) X  \1_{A} \1_{{\{S_{j,i}(r)>0\}}}  ]  = S_{i,j}(0) \E^{\Q_j}[X \1_{\{S_{j,i}(t)>0\}} \1_A \1_{{\{S_{j,i}(r)>0\}}} ]  \\
        		&=  S_{i,j}(0) \E^{\Q_j}[\E_r^{\Q_j} [X \1_{ {\{S_{j,i}(t)>0\}}}]  \1_A \1_{{\{S_{j,i}(r)>0\}}}  ] \\
		&= \E^{\Q_i}[S_{i,j}(r) \E_r^{\Q_j} [X \1_{ {\{S_{j,i}(t)>0\}}}]   \1_{A} ]
    \end{align*}	
    by applying \eqref{eq:consistent2a'} twice, which yields  \eqref{eq:consistent2b}. The fact that $S_i$ is a $\Q_i$--supermartingale  follows from \eqref{eq:consistent2b} with $X=1$.
    
     \ref{P:cons:2}:  Fix $i$ and $j$. As in Proposition~2.3 in \citet{Perkowski_Ruf_2014}, we may replace $t$ in \eqref{eq:consistent2a}  by a stopping time $\tau$.  With $A = \Omega$, we then have
     \begin{align*}
             \E^{\Q_i}[S_{i,j}(\tau) ]= S_{i,j}(0) {\Q_j}(S_{j,i}(\tau)>0)
     \end{align*}
     for all stopping times $\tau$.
     Recall now that $S_{i,j}$ is  a $\Q^i$--supermartingale and localize with a sequence of first crossing times.
     
         \ref{P:cons:3}: The first part follows as in \ref{P:cons:2}. The second statement follows directly from \eqref{eq:consistent2b}.
\end{proof}

\begin{proof}[Proof of Theorem~\ref{T:extending_val_operator_prelim}]
{ ~}

\ref{T:extending_val_operator_prelim:a} \& \ref{T:extending_val_operator_prelim:c}: Let  $\overline \Q$ be a valuation measure with respect to the basket.  Define now a family $(\Q_i)_i$ of  probability measures $(\Q_i)_i$  by
$
{\dd\Q_i}/{\dd \overline \Q}={\overline{S}_i(T)}/{\overline{S}_i(0)}.
$
Thanks to \eqref{eq:170310.3}, we have $\sum_i \overline{S}_i(0)  {\Q_i}=\overline \Q$ and thus $\overline \Q \sim \sum_i \Q_i$. Moreover,  thanks to \eqref{eq:160727.1} we have 
 \begin{align*}
	\E^{\Q_i}[S_{i,j}(t) \1_A] &=  (\overline{S}_{i}(0))^{-1} \E^{\overline \Q} [S_{i,j}(t) \1_A  \overline{S}_{i}(t)  \1_{\{S_{j,i}(t)>0\}}]   \\
		&= (\overline{S}_{i}(0))^{-1}  \E^{\overline \Q}[\overline{S}_{j}(t) \1_A \1_{\{S_{j,i}(t)>0\}}] \\
        &= (\overline{S}_{i}(0))^{-1}  \overline{S}_{j}(0)   {\Q_j}(A \cap \{S_{j,i}(t)>0\}) = S_{i,j}(0)   {\Q_j}(A \cap \{S_{j,i}(t)>0\})
    \end{align*}
for all $i,j$,  $A\in \mathcal F(t)$, and $t$. Hence, the family $(\Q_i)_{i}$   is num\'eraire-consistent.
 
Next, suppose that $\overline C\in L^1(\overline \Q)$. The representation of $\overline C$ given by~\eqref{def:basketvalue} and Bayes' formula yield
\begin{align*}
		\E^{\overline \Q}_r[\overline C ] & = \sum_j \E^{\overline \Q}_r\left[ \frac{C_j}{|\mathfrak{A}(T)|} \overline S_j(T)  \1_{\{j \in \mathfrak A(T)\}} \right]  =\sum_{j} \E^{\Q_{j}}_r\left[\frac{C_j}{|\mathfrak{A}(T)|}\right]  \overline S_j(r)   \1_{\{j \in \mathfrak A(r)\}}; 
	\end{align*}
hence \eqref{eq:extension_val_operator} follows. If $\overline C = \overline C \1_{\{i \in \mathfrak A(T)\}}$, $\overline \Q$--almost surely,  for some $i$, then the same computation, in conjunction with $C_j \overline S_j(T) = C_i \overline S_{i}(T)$ which follows from \eqref{eq:160727.1} for all $j \in \mathfrak A(T)$,
yields
 \eqref{eq:220815}.  The uniqueness of $(\Q_i)_i$  follows from \eqref{eq:220815} with $r = 0$ and $\overline C = \1_{A \cap \{i \in \mathfrak A(T)\}}$ for all $A \in \mathcal F(T)$. 

\ref{T:extending_val_operator_prelim:b}: The uniqueness is clear. For existence, define $\overline \Q = \sum_i \overline{S}_i(0)  {\Q_i}$.  It is clear again that $\overline \Q\sim  \sum_i \Q_i$. By the computations in the first part of the proof, we only need to argue that the process $\overline{S}$ is a $\overline \Q$--martingale and
$
{\dd\Q_i}/{\dd \overline \Q}={\overline{S}_i(T)}/{\overline{S}_i(0)}.
$ 
To this end, it suffices to  show that
\begin{equation}\label{L:consistent extended}
\E^{\overline \Q}[\overline{S}_{i}(t)\1_A]=\overline{S}_{i}(0) { \Q_i}(A)
\end{equation}
for all $i,t$ and $A\in\mathcal{F}(t)$. Indeed \eqref{eq:160727.1} and Proposition~\ref{P:cons}\ref{P:cons:1} imply
 \begin{align*}
	                \E^{\overline \Q}[\overline{S}_{i}(t)\1_A]&= \sum_j \overline{S}_{j}(0) \E^{\Q_j}[\overline{S}_{i}(t)\1_A ]
	                = \sum_j \overline{S}_{i}(0)S_{i,j}(0) \E^{\Q_j}[\overline{S}_{i}(t)\1_A\1_{\{S_{j,i}(t)>0\}}]\\
	       &= \sum_j \overline{S}_{i}(0) \E^{\Q_i}[S_{i,j}(t) \overline{S}_{i}(t)\1_A]
	       	       = \sum_j \overline{S}_{i}(0) \E^{\Q_i}[ \overline{S}_{j}(t)\1_A] = \overline{S}_{i}(0) {\Q_i}(A),     
    \end{align*}
yielding \eqref{L:consistent extended}.
\end{proof}

\begin{proof}[Proof of Theorem~\ref{T:1'}] 
{  ~ }

\ref{T:1':a}: Consider the probability measures $\widetilde{\Q}_i$ given by $\dd \widetilde{\Q}_i/\dd\Q_i = \sum_j  S_{i,j}(T) / \sum_j  S_{i,j}(0)$ for each $i$,  and $\overline \Q = \sum_i \widetilde{\Q}_i/ d$.  Then 
we have $\overline\Q  \sim \sum_j \Q_j$. Moreover, $\overline{S}$ is a $\widetilde \Q_i$--martingale for each $i$, thus it is also a $\overline \Q$--martingale. 

\ref{T:1':b}: We set $\Prob = \sum_i \Q_i/d$ and fix $\varepsilon>0$ as in \ref{T:1':b}\ref{T:1':iii}.  To prove the statement is suffices to construct a strictly positive $\Prob$--martingale $Z$ such that $Z  \overline{S}$ is also a  $\Prob$--martingale.  We proceed in several steps.

\emph{Step~1}:  For the construction of $Z$ below, we shall iteratively pick the strongest currency until some time when it is not the strongest anymore, at which point we switch to the new strongest one.  To follow this program, define the sequences of stopping times $(\tau_n)_{n \in \N_0}$ and currency identifiers $(i_n)_{n \in \N}$ by $\tau_0 = 0$ and
	\begin{align}
		i_n &= \arg \min_{i \in \{1, \ldots, d\}} \{ (\overline{S}_{i}(\tau_{n-1}))^{-1}\};  \label{E in}\\
		\tau_n &= \inf\{t \in [\tau_{n-1},T]: (\overline{S}_{i_n} (t))^{-1}> d+\varepsilon\} \label{eq:250815}
	\end{align}
	for all $n \in \N$, where possible conflicts  in \eqref{E in} are solved by lexicographic order.
	
\emph{Step~2}: We claim that  	$\Prob(\lim_{n \uparrow \infty} \tau_n > T) = 1$.
To see this, assume that  $\Prob(\lim_{n \uparrow \infty} \tau_n \leq T) > 0$. Then there exist $i$ and $j$ such that  $(\overline{S}_{i})^{-1}$ has infinitely many upcrossings from $d$ to $d + \varepsilon$  with strictly positive $\Q_j$--probability.  Next,  by a simple localization argument we may assume that $(\overline{S}_{j})^{-1}$ is a $\Q_j$--martingale and we consider the corresponding measure $\widehat\Q$, given by $\dd \widehat\Q/ \dd \Q_j = \overline{S}_{j}(0)(\overline{S}_{j}(T))^{-1}$.  Note that $\widehat{\Q} \sim \Q_j$ and that the process $\overline{S}_{i}$  is a bounded $\widehat{\Q}$--martingale that has infinitely many downcrossings from $1/d$ to $1/(d+\varepsilon)$ with positive probability. This, however, contradicts the supermartingale convergence theorem, which then yields a contradiction. Thus, the claim holds.

\emph{Step~3}: Assume that we are given a nonnegative stochastic process $Z$ such that   $Z^{\tau_n}$ and $Z^{\tau_n}  \overline{S}^{\tau_n}$  are $\Prob$--martingales for each $n\in \N$, in the notation of \eqref{eq:250815}.  We then claim that $Z$ and $Z  \overline{S}$ are $\Prob$--martingales.  To see this, note that $Z$ and $Z \overline{S}$  are $\Prob$--local martingales by \emph{Step~2}.  
Next, define a sequence of probability measures $(\Q^n)_{n \in \N}$ via $\dd \Q^n / \dd \Prob = Z^{\tau_n}(T)$ and note that $\overline{S}^{\tau_n}$ is a $\Q^n$--martingale satisfying $\overline{S}_{i_n}(\tau_{n-1}) \geq 1/d$ on the event $\{\tau_{n-1} \leq T\}$, where $i_n$ is given in \eqref{E in}.  Thus, on  $\{\tau_{n-1} \leq T\}$ we have
\begin{align*}
	\frac{1}{d} \leq \E^{\Q^n}\left[ \overline{S}_{i_n}(\tau_n)| \mathcal F(\tau_{n-1})\right] \leq 1-q_n + \frac{q_n}{d+\varepsilon},
\end{align*}
where $q_n = \Q^n(\tau_n \leq T|\mathcal F(\tau_{n-1}))$, for each $n \in \N$.  We obtain that
\begin{align*}
	q_n \leq \frac{d^2 + \varepsilon d-d-\varepsilon} {d^2 + \varepsilon d-d} = \eta \in (0,1),
\end{align*}
which again yields
\begin{align*}
	\Q^n(\tau_n \leq T) &\leq \E^{\Q^n}\left[\Q^n(\left.\tau_n \leq T\right|\mathcal F(\tau_{n-1})) \1_{\{\tau_{n-1} \leq T\}}\right]
		\leq \eta \Q^n\left(\tau_{n-1} \leq T\right) \leq \eta^n
\end{align*}
for each $n \in \N$,
where the last inequality follows by induction.  This yields $\lim_{n \uparrow \infty} \Q^n(\tau_n \leq T) = 0$.  Now, a simple extension of Lemma~III.3.3 in \citet{JacodS}, such as the one of Corollary~2.2 in \citet{Blanchet_Ruf_2012}, yields that $Z$ is a $\Prob$--martingale.  As $\overline{S}$ is bounded, also  $Z \overline{S}$ is a $\Prob$--martingale.

\emph{Step~4}: We now construct a stochastic process $\widetilde Z$ that satisfies the assumptions of \emph{Step~3}. Towards this end, for each $i$, let $Z_i$ denote the unique $\Prob$--martingale such that $\dd \Q_i / \dd \Prob = Z_i(T)$. 
With the notation of \eqref{eq:250815}, \ref{T:1':b}\ref{T:1':i} and  \ref{T:1':ii} yield that $Z_{i_n}(\tau_{n-1}) > 0$ for each $n \in \N$.   This allows us to define the process $\widetilde Z$ inductively by 
$\widetilde Z(0)=1$ and 
	\begin{align*}
		\widetilde Z(t) = \widetilde Z(\tau_{n-1}) \times \frac{(\overline{S}_{i_n}(t))^{-1} \1_{\{Z_{i_n}(t) > 0\}} Z_{i_n}(t)}{(\overline{S}_{i_n}(\tau_{n-1}))^{-1} Z_{i_n}(\tau_{n-1})}
	\end{align*}
	 for all $t \in (\tau_{n-1}, \tau_n \wedge T]$ and $n \in \N$. Here we have again used the indices $(i_n)_{n \in \N}$  of \eqref{E in}.
	 As 
	 \begin{align*}
	 	\E^{\Prob}[(\overline{S}_{i_n}(\tau_n))^{-1} \1_{\{Z_{i_n}(\tau_n) > 0\}} Z_{i_n}(\tau_n) | \mathcal F(\tau_{n-1})]
			&= 	 \E^{\Q_{i_n}}[(\overline{S}_{i_n}(\tau_n))^{-1}  | \mathcal F(\tau_{n-1})] Z_{i_n}(\tau_{n-1})\\
			&=  (\overline{S}_{i_n}(\tau_{n-1}))^{-1} Z_{i_n}(\tau_{n-1})
	\end{align*}
	on $\{\tau_{n-1} \leq T\}$, the process
	 $\widetilde Z^{\tau_n}$ is a $\Prob$--martingale for each $n \in \N$.  
	 We now fix $j$ and argue that 	 $\overline{S}_{j}^{\tau_n}\widetilde Z^{\tau_n}$ is a $\Prob$--martingale for each $n \in \N$.  Thanks to \eqref{eq:160727.1} we have 
	\begin{align*}
		\overline{S}_{j}(t) \widetilde Z(t) = \overline{S}_{j}(\tau_{n-1}) \widetilde Z(\tau_{n-1}) \times \frac{S_{i_n,j}(t) \1_{\{Z_{i_n}(t) > 0\}} Z_{i_n}(t)}{S_{i_n,j}(\tau_{n-1}) Z_{i_n}(\tau_{n-1})}
	\end{align*}	 
 for all $t \in (\tau_{n-1}, \tau_n \wedge T]$  on $\{\overline{S}_{j}(\tau_{n-1}) > 0\}$ and $n \in \N$.  As zero is an absorbing state for $\overline{S}_{j}$ under $\Prob = \sum_i \Q_i/d$ the same arguments as above yield that $\overline{S}_{j}^{\tau_n}\widetilde Z^{\tau_n}$ is a $\Prob$--martingale for each $n \in \N$.

\emph{Step~5}: If $\Prob$ satisfies NSD, then $\widetilde{Z}$ is strictly positive because $\widetilde Z_{i_n}(\tau_n)>0$ for each $n \in \N$, in the notation of \eqref{E in} and \eqref{eq:250815}.  In this case, the proof of \ref{T:1':b} is finished.  However, under the more general condition in  \ref{T:1':b}\ref{T:1':iii} it cannot be guaranteed that the $\Prob$--martingale $\widetilde{Z}$ is strictly positive as it might jump to zero on  $\bigcup_{n \in \N} \lc\tau_n\rc  \bigcap  \bigcup_{m \in \{1, \cdots, N\}} \lc T_m\rc$. To address this issue, we shall modify the construction in \emph{Step~4} at the predictable times  $(T_m)_{m \in \{1, \cdots, N\}}$
to obtain a strictly positive  $\Prob$--martingale $Z$ such that also $Z \overline{S}$ is a $\Prob$--martingale. 

\emph{Step~5A}: We may assume that $0  < T_m < T_{m+1}$  on $\{T_m < \infty\}$ for all $m \in \{1, \cdots, N\}$ and, set, for sake of notational convenience, $T_0 = 0$ and $T_{N+1} = \infty$.   
In \emph{Step~5B}, we shall construct a family of strictly positive $\Prob$--martingales $(Y_m)_{m \in \{1, \cdots, N+1\}}$ that satisfy the following two conditions:
	\begin{itemize}
		\item $Y_m = Y_m^{T_m}$ and $Y_m^{T_{m-1}} = 1$; and
		\item $Y_m \overline{S}^{T_m} -  \overline{S}^{T_{m-1}}$ is a $\Prob$--martingale for all $m \in \{1, \cdots, N+1\}$.
	\end{itemize}
If we have such a family then the process $Z = \prod_{m=1}^{N+1} Y_m$ is a strictly positive $\Prob$--martingale and $Z \overline{S}$  a nonnegative $\Prob$--martingale. This then concludes the proof. 

\emph{Step~5B}: 
In order to construct a family of strictly positive $\Prob$--martingales $(Y_m)_{m \in \{1, \cdots, N+1\}}$ as desired, let us fix some $m \in \{1, \cdots, N+1\}$.   We first define a process $\widetilde Y$ by $\widetilde Y_m = 1$ on $\lc 0, T_{m-1}\rc$ and then by proceeding exactly as in \emph{Step~4}, but with $\tau_0= 0$ replaced by $\tau_0 = T_{m-1}$, with $\overline{S}$ replaced by $\overline{S}^{T_m}$ and with $Z_i$ replaced by $Z^{T_m}_i$ for each $i$.  Then $\widetilde Y_m$ is a nonnegative $\Prob$--martingale that   satisfies the two conditions of \emph{Step~5A}.
Let  $\widetilde{M}$ now denote the stochastic logarithm of $\widetilde Y_m$ and  $M_i$ the   stochastic logarithm of $(\overline{S}_{i})^{-1} Z_i$ for each $i$. 
 Note that, for each $i$, $M_i$ is only defined up to the first time that $(\overline{S}_{i})^{-1} Z_i$ hits zero, see also \citet{Larsson:Ruf:2017}.  Next, define the stochastic process
$$M = \widetilde{M}+  \left(\frac{1}{|\mathfrak{A}(T_m-)|} \sum_{j \in \mathfrak{A}(T_m-)} \Delta M_j(T_m)  - \Delta \widetilde{M}(T_m)\right) \1_{\lc T_m, \infty\lc };
$$
that is, $M$ equals $\widetilde{M}$ apart from the modification at time $T_m$ on $\{T_m<\infty\}$, where we replace its jump by the average jumps of the deflators corresponding to the active currencies at this point of time.   Then we have $\Delta M>1$, which implies that its stochastic exponential $Y_m = \mathcal E(M)$ is strictly positive. 
As the stopping time $T_m$ is predictable, the predictable stopping theorem implies that  $Y_m$ is a   $\Prob$--martingale and the two conditions in \emph{Step~5A} are satisfied.
\end{proof}

\bibliographystyle{apalike}
\setlength{\bibsep}{1pt}
\small\bibliography{aa_bib}{}
\end{document}